\title{Nested Sequents for First-Order Modal Logics via Reachability Rules} 
\author{Tim S. Lyon}{Computational Logic Group, Institute of Artificial Intelligence, Technische Universit\"at Dresden, Germany \and \url{https://sites.google.com/view/timlyon}}{timothy_stephen.lyon@tu-dresden.de}{https://orcid.org/0000-0003-3214-0828}{}
\authorrunning{T.\,S. Lyon} 
\keywords{First-order, Grammar, Labeled Sequent, Modal logic, Nested sequent, Proof theory, Propagation, Reachability, Refinement} 
\renewcommand{\phi}{\varphi}
\newcommand\new[0]{\reflectbox{\ensuremath{\mathsf{N}}}}
\newcommand{\ifandonlyif}{iff }
\newcommand{\iffi}{\textit{iff} }
\newcommand{\etc}{$\ldots$ }
\newcommand{\dfn}{Definition}
\newcommand{\thm}{Theorem}
\newcommand{\lem}{Lemma}
\newcommand{\sect}{Section}
\newcommand{\cptr}{Chapter}
\newcommand{\fig}{Figure}
\newcommand{\rmk}{Remark}
\newcommand{\albet}{\Upsigma}
\newcommand{\cate}{}
\newcommand{\sfour}{\mathrm{S4}}
\newcommand{\sfive}{\mathrm{S5}}
\newcommand{\langsfour}{L_{\sfour}}
\newcommand{\langsfive}{L_{\sfive}}
\newcommand{\concat}{\cdot}
\newcommand{\thuesys}{S}
\newcommand{\g}{\mathrm{S}}
\newcommand{\glang}{L_{\g}}
\newcommand{\pto}{\longrightarrow}
\newcommand{\dr}{\pto^{*}_{\g}}
\newcommand{\fd}{\Diamond}
\newcommand{\bd}{\blacklozenge}
\newcommand{\ques}{\langle ? \rangle}
\newcommand{\albetstr}{\albet^{*}}
\newcommand{\stra}{s}
\newcommand{\strb}{t}
\newcommand{\conv}[1]{{#1}^{-1}}
\newcommand{\strc}{r}
\newcommand{\empstr}{\varepsilon}
\newcommand{\rel}{\mathcal{R}}
\newcommand{\empseq}{\emptyset}
\newcommand{\lab}{\mathrm{Lab}}
\newcommand{\vx}{\mathrm{X}}
\newcommand{\vy}{\mathrm{Y}}
\newcommand{\pred}{\Upphi}
\newcommand{\var}{\mathrm{Var}}
\newcommand{\imp}{\rightarrow}
\newcommand{\lang}{\mathcal{L}}
\newcommand{\ns}{\Phi}
\newcommand{\nsii}{\Psi}
\newcommand{\nant}{\Gamma}
\newcommand{\nantii}{\Sigma}
\newcommand{\ncon}{\Delta}
\newcommand{\nconii}{\Pi}
\newcommand{\bl}{[}
\newcommand{\br}{]}
\newcommand{\hol}{\{}
\newcommand{\hor}{\}}
\newcommand{\sar}{\vdash}
\newcommand{\prgr}[1]{PG(#1)}
\newcommand{\prgrdom}{V}
\newcommand{\prgredges}{E}
\newcommand{\ppath}{\pi}
\newcommand{\emppath}{\varepsilon}
\newcommand{\id}{(ax)}
\newcommand{\ax}{(ax)}
\newcommand{\disr}{(\lor_{r})}
\newcommand{\negr}{(\neg_{r})}
\newcommand{\negl}{(\neg_{l})}
\newcommand{\existsr}{(\exists_{r})}
\newcommand{\disl}{(\lor_{l})}
\newcommand{\existsl}{(\exists_{l})}
\newcommand{\domr}{(ds)}
\newcommand{\ru}{(r)}
\newcommand{\ntr}{\mathfrak{N}}
\newcommand{\ltr}{\mathfrak{L}}
\newcommand{\seqcomp}{\otimes}
\newcommand{\D}{(d)}
\newcommand{\dia}{\Diamond}
\newcommand{\diar}{(\dia_{r})}
\newcommand{\dial}{(\dia_{l})}
\newcommand{\sa}{(g_{n,k})}
\newcommand{\prdia}{(p_{\dia})}
\newcommand{\asmt}{\mu}
\newcommand{\qk}{\mathsf{QK}}
\newcommand{\lqk}{\mathsf{G3}\qk}
\newcommand{\lql}{\mathsf{G3}\ql}
\newcommand{\qll}{\ql\mathsf{L}}
\newcommand{\idom}{(id)}
\newcommand{\ddom}{(dd)}
\newcommand{\cdom}{(cd)}
\newcommand{\ned}{(nd)}
\newcommand{\prexi}{(s_{\exists}^{1})}
\newcommand{\prexii}{(s_{\exists}^{2})}
\newcommand{\fcset}{\mathcal{C}}
\newcommand{\gpset}{\mathcal{G}}
\newcommand{\ql}{\mathsf{QK}(\fcset)}
\newcommand{\gpc}{G}
\newcommand{\serc}{S}
\newcommand{\idc}{I_{d}}
\newcommand{\ddc}{D_{d}}
\newcommand{\cdc}{C_{d}}
\newcommand{\nedc}{N_{d}}
\newcommand{\nql}{\mathsf{N}\ql}
\newcommand{\nqk}{\mathsf{N}\qk}
\newcommand{\I}[1]{I(#1)}
\begin{document}

\maketitle

\begin{abstract}
We introduce the first cut-free nested sequent systems for first-order modal logics that admit increasing, decreasing, constant, and empty domains along with so-called general path conditions and seriality. We obtain such systems by means of two devices: \emph{reachability rules} and \emph{structural refinement}. Regarding the former device, we introduce reachability rules as special logical rules parameterized with formal grammars (viz. types of semi-Thue systems) that operate by propagating formulae and/or checking if data exists along certain paths within a nested sequent, where paths are encoded as strings generated by a parameterizing grammar. Regarding the latter device, structural refinement is a relatively new methodology used to extract nested sequent systems from labeled systems (which are ultimately obtained from a semantics) by means of eliminating structural/relational rules, introducing reachability rules, and then carrying out a notational translation. We therefore demonstrate how this method can be extended to the setting of first-order modal logics, and expose how reachability rules naturally arise from applying this method.
\end{abstract}


\section{Introduction}

 Analytic proof systems consist of sets of inference rules that stepwise (de)compose logical theorems. Such proof systems not only facilitate reasoning within a given logic, but have proven useful for establishing (meta-)logical properties of logics as well as for developing automated reasoning tools. One of the most prominent formalisms for constructing analytic proof systems is Gentzen's sequent calculus~\cite{Gen35a,Gen35b}, which utilizes (pairs of) multisets of formulae, called \emph{sequents}, to carry out proofs of logical theorems. Since its introduction, Gentzen's formalism has been generalized to supply ever more expressive logics with analytic proof systems. Such extensions include \emph{display calculi}~\cite{Bel82}, \emph{hypersequent calculi}~\cite{Avr96}, \emph{labeled calculi}~\cite{Sim94,Vig00}, and \emph{nested calculi}~\cite{Bul92,Kas94}, all of which have been exploited to establish meaningful results such as decidability~\cite{Bru09,TiuIanGor12}, interpolation~\cite{FitKuz15,LyoTiuGorClo20}, complexity hardness~\cite{LyoAlv22}, and automated counter-model extraction~\cite{LyoBer19,TiuIanGor12}. 
 In this paper, we will be concerned with the labeled and nested formalisms.
 
 Labeled sequent calculi generalize Gentzen's formalism by integrating semantic elements of a logic into the the syntax of sequents (cf.~\cite{Kan57}). As such, labeled sequents can typically be viewed as \emph{graphs} of Gentzen-style sequents, that is, graphs of (pairs of) multisets of logical formulae. This idea of defining inference rules over graphs of sequents has proven fruitful for the attainment and definition of sequent-style systems for diverse and sizable classes of logics~\cite{Gab96,Neg05,Sim94,Vig00}. Even more, sequent systems built within the labeled formalism exhibit a high degree of modularity, typically possess favorable proof-theoretic properties (e.g. invertibility of rules and cut-elimination), and algorithms exist that build labeled calculi for certain classes of logics on demand~\cite{CiaMafSpe13}. In spite of these desirable characteristics, the labeled formalism has certain drawbacks; in particular, labeled proof systems usually involve an unnecessarily complicated syntax (which incorporates redundant structures), include superfluous inference rules, and it is often cumbersome to establish the termination of associated proof search algorithms.
 
 By contrast, the nested sequent formalism only employs \emph{trees} of Gentzen-style sequents. Having originally been introduced by Bull~\cite{Bul92} and Kashima~\cite{Kas94}, and further expanded upon by Br\"unnler~\cite{Bru09} and Poggiolesi~\cite{Pog09}, the formalism of nested sequents avoids the pitfalls of labeled sequents, often yielding proof calculi that require less bureaucracy, are more compact, and where termination of proof search is more easily obtained. Moreover, nested sequent systems have a range of applications, 
 being used in knowledge integration algorithms~\cite{LyoAlv22}, serving as a basis for constructive interpolation and decidability techniques~\cite{FitKuz15,LyoTiuGorClo20,Lyo21thesis,TiuIanGor12}, and even being used to solve open questions about axiomatizability~\cite{IshKik07}. 
 
 More recently, it was discovered that nested sequent calculi could be extracted from labeled sequent calculi with the properties of the former transferred to the latter~\cite{CiaLyoRamTiu21,GorRam12,Pim18}. This discovery led to the formulation of a methodology referred to as \emph{structural refinement}~\cite{Lyo21a,Lyo21thesis,LyoBer19,Lyo21b}, which consists of two parts: first, one leverages general results regarding the (automated) construction of labeled sequent calculi (in possession of favorable properties) for a class of logics based on their semantics. Second, one transforms the labeled calculi into nested calculi via the introduction of \emph{propagation} or \emph{reachability rules} (cf.~\cite{CasCerGasHer97,Fit72,Lyo21thesis}) and the elimination of structural rules, followed by a notational translation. This entire process yields a method of transforming the semantics of a logic into a (cut-free) nested sequent calculus for the logic, being used to provide new and previously unknown nested sequent calculi~\cite{Lyo21b}. Although this method has been investigated and put to use for a variety of different \emph{propositional} modal logics, it has yet to be applied to \emph{first-order} modal logics. Thus, the first contribution of this paper is the application and study of structural refinement in the setting of \emph{first-order} modal logics, being used to obtain (to the best of the author's knowledge) the first (cut-free) nested sequent systems for such logics. Interestingly, this method `naturally' produces a generalized version of nested sequents, which include collections of terms (called \emph{signatures}) in their syntax, used to process quantificational data.
 
 

 As in the propositional setting, first-order modal logics admit a Kripke-style semantics, but unlike in the propositional setting, each world is associated with a domain of elements used to interpret quantificational formulae~\cite{Kri63}. One may then characterize distinct first-order modal logics by either imposing frame conditions on the accessibility relation of the model (as with propositional modal logics), or by imposing conditions on the domains associated with worlds~\cite{BraGhi07,FitMen98}. Regarding the latter, one obtains distinct first-order modal logics depending on if domains are permitted to be empty, if domains are permitted to increase or decrease along the accessibility relation, or if domains are required to be constant. In this paper, we consider a class of first-order modal logics that consists of extensions of a base logic $\qk$~\cite{BraGhi07} with seriality, general path conditions~\cite{CiaLyoRamTiu21}, and the aforementioned domain conditions. 
 
 In order to give this class of logics a uniform nested sequent presentation, we make use of so-called \emph{reachability rules}
, which possess two kinds of functionality: such rules may (i) propagate formulae along certain paths, and/or (ii) search for data along certain paths, within a labeled or nested sequent. The first kind of functionality is exhibited by the well-known class of \emph{propagation rules}~\cite{CasCerGasHer97,Fit72,GorPosTiu11}, which have proven vital for the provision of nested sequent systems for propositional modal and constructive logics~\cite{CiaLyoRamTiu21,GorPosTiu08,GorPosTiu11,LyoBer19}. Rules exhibiting the second kind of functionality 
 have only been defined more recently~\cite{Fit14,Lyo21a,Lyo21thesis}. Thus, reachability rules subsume the class of propagation rules, allowing for either or both kinds of functionality~\cite{Lyo21thesis}. 
 
 
 Our second contribution in this paper is the formulation of reachability rules for managing the introduction of quantificational formulae in labeled and nested sequent proofs. Motivated by the work in~\cite{GorPosTiu11,LyoBer19}, we parameterize such rules with (restricted versions of) semi-Thue systems~\cite{Pos47}, which encode along what paths formulae may be propagated and along what paths a rule should search for data. Such rules endow our nested sequent systems with a high degree of modularity as one nested calculus for a first-order modal logic may be straightforwardly converted into a nested calculus for another first-order modal logic by simply parameterizing the reachability rules with alternative semi-Thue systems.


The paper is organized as follows: In \sect~\ref{sec:log-prelims-I}, we define the considered class of first-order modal logics, as well as provide the grammar theoretical foundations for the formulation of our reachability rules. In \sect~\ref{sec:labeled-calc}, we present labeled sequent systems for first-order modal logics based on those of~\cite[\cptr~12.1]{NegPla11} and \cite[\cptr~6]{Vig00}. In the subsequent section (\sect~\ref{sec:struc-refinement}), we show how to define reachability rules in the context of labeled sequents, and apply the structural refinement methodology to derive nested sequent calculi from the given labeled calculi. In \sect~\ref{sec:nested-calculi}, we present the first (cut-free) nested sequent calculi for first-order modal logics, and in the final section (\sect~\ref{sec:conclusion}) we conclude.

\section{Preliminaries}\label{sec:log-prelims-I}

\subsection{First-order Modal Logics}\label{subsec:fo-modal-logics}

We let $\var := \{x, y, z, \ldots\}$ be a denumerable set of \emph{variables} and $\pred := \{p, q, r, \ldots\}$ be a set of denumerably many \emph{predicates} of each arity $n \in \mathbb{N}$ with \emph{propositional variables} predicates of arity zero. \emph{Atomic formulae} are of form $p(x_{1}, \ldots, x_{n})$, and we will often write them as $p(\vec{x})$. Similarly, we often write a list of variables $x_{1}, \ldots, x_{n}$ as $\vec{x}$.  The \emph{first-order modal language} $\lang$ is the set of formulae generated via the following grammar in BNF:
$$
\phi ::= p(\vec{x}) \ | \ \neg \phi \ | \ \phi \lor \phi \ | \ \Diamond \phi \ | \ \exists x \phi
$$
where $p$ ranges over $\pred$, and the variables $\vec{x} = x_{1}, \ldots, x_{n}$ and $x$ range over the set $\var$. We refer to a formula in $\lang$ as an \emph{$\lang$-formula}, and define $\phi \imp \psi := \neg \phi \lor \psi$, $\Box \phi := \neg \dia \neg \phi$, and $\forall x \phi := \neg \exists x \neg \phi$. As usual, we say that the occurrence of a variable $x$ in $\phi$ is \emph{free} given that $x$ does not occur within the scope of a quantifier. Also, we let $\phi[y/x]$ denote the substitution of the variable $y$ for each free occurrence of the variable $x$ in $\phi$. 

\begin{definition}[Frame, Model]
A \emph{frame} is a tuple $F = (W,R,D)$ such that:
\begin{itemize}

\item $W$ is a non-empty set $\{w, u, v, \ldots\}$ of \emph{worlds};

\item $R \ \subseteq W \times W$ is a binary relation on $W$;

\item $D$ is a \emph{domain function} mapping a world $w \in W$ to a (potentially empty) set $D(w)$ (called the \emph{domain of $w$}), and where $D(W) = \bigcup_{w \in W} D(w)$ is a non-empty 
 set disjoint from $W$ (called the \emph{domain of $W$}).
\end{itemize}
We say that a frame $F$ is of
\begin{itemize}

\item \emph{increasing domains} \iffi if $wRu$, then $D(w) \subseteq D(u)$;

\item \emph{decreasing domains} \iffi if $wRu$, then $D(u) \subseteq D(w)$;

\item \emph{constant domains} \iffi if $wRu$, then $D(w) = D(u)$;

\item \emph{non-empty domains} \iffi for each $w \in W$, $D(w) \neq \emptyset$.

\end{itemize}
We define a \emph{model} $M$ based on a frame $F$ to be an ordered pair $(F,V)$ where $F$ is a frame and $V$ is a \emph{valuation function} such that $V(p,w) \subseteq D(w)^{n}$ with $n \in \mathbb{N}$.\footnote{We note that $D(w)^{n}$ is the $n$-ary Cartesian product  $\underbrace{D(w) \times \cdots \times D(w)}_{n}$.} We make the simplifying assumption that for each world $w \in W$, $D(w)^{0} = \{()\}$ with $()$ the empty tuple, meaning $V(p,w) = \{()\}$ or $V(p,w) = \emptyset$, for any propositional variable $p$.

Given a model $M = (W,R,D,V)$, we define an \emph{assignment} over $M$ to be a function $\asmt : \var \to D(W)$ mapping variables to elements of the domain of $W$, and we let $\asmt[d/x]$ be the same as $\asmt$, but where the variable $x$ is mapped to $d \in D(W)$.
\end{definition}



\begin{definition}[Semantic Clauses]
Let $M = (W,R,D,V)$ be a model with $w \in W$ and let $\asmt$ be an assignment over $M$. The \emph{satisfaction relation} $\Vdash$ is defined as follows:
\begin{itemize}

\item $M,w,\asmt \Vdash p(x_{1}, \ldots, x_{n})$ \ifandonlyif $(\asmt(x_{1}), \ldots, \asmt(x_{n})) \in V(p,w)$;

\item $M,w,\asmt \Vdash \neg \phi$ \iffi $M,w,\asmt \not\Vdash \phi$;

\item $M,w,\asmt \Vdash \phi \lor \psi$ \ifandonlyif $M,w,\asmt \Vdash \phi$ or $M,w,\asmt \Vdash \psi$;



\item $M,w,\asmt \Vdash \Diamond \phi$ \ifandonlyif there exists a $u \in W$ such that $wRu$ and $M,u, \asmt \Vdash \phi$;


\item $M,w,\asmt \Vdash \exists x \phi$ \ifandonlyif there exists a $d \in D(w)$ such that $M,w,\asmt[d/x] \Vdash \phi$;


\item $M,w \Vdash \phi$ \iffi for all assignments $\asmt$ over $M$, $M,w,\asmt \Vdash \phi$;

\item $M \Vdash \phi$ \iffi for worlds $w \in W$, $M,w \Vdash \phi$.

\end{itemize}
An $\lang$-formula $\phi$ is \emph{valid} relative to a class of frames $\mathcal{F}$ \iffi $M \Vdash \phi$ for all models $M$ based on a frame $F \in \mathcal{F}$. 
\end{definition}

We define the base logic $\qk$ to be the set of all valid $\lang$-formulae over the class of all frames. We consider extensions of $\qk$ by imposing a set $\fcset$ of frame conditions (shown in \fig~\ref{fig:axioms-related-conditions} with their corresponding axioms) on the class of all frames. Following the top-down order of \fig~\ref{fig:axioms-related-conditions}, such frame conditions include: (1) the seriality condition $\serc$, (2) generalized path conditions $\gpc(n,k) := \forall w, u, v (w R^{n} u \land w R^{k} v \imp u R v)$ with $n,k \in \mathbb{N}$, (3) the increasing domain condition $\idc$ (with the \emph{Barcan formula} as the corresponding axiom), (4) the decreasing domain condition $\ddc$ (with the \emph{converse Barcan formula} as the corresponding axiom), (5) the constant domain condition $\cdc$, and (6) the non-empty domain condition $\nedc$.\footnote{For an introduction to first-order modal logics, frame conditions, and corresponding axioms, see Fitting and Mendelsohn~\cite{FitMen98}.} We use $\gpc(n,k)$ to denote a generalized path condition, use $\gpset$ to denote a set of such conditions, and note that such conditions cover well-known properties such as reflexivity (when $n = k = 0$), transitivity (when $k = 2$ and $n= 0$), symmetry (when $n = 1$ and $k = 0$), and Euclideanity (when $n = k = 1$). For a set $\fcset$ of frame conditions, we define $\mathcal{F}_{\fcset}$ to be the class of frames satisfying $\fcset$, and we define the logic $\ql$ to be the set of all valid $\lang$-formulae over the class $\mathcal{F}_{\fcset}$ (with $\mathsf{QK}(\emptyset) = \qk$). We make the simplifying assumption that if $\cdc \in \fcset$, then $\idc, \ddc \not\in \fcset$ as the former property implies the latter two. 

\begin{figure}[t]
\begin{center}
\bgroup
\def\arraystretch{1.1}
\begin{tabular}{| c | l | l |}
\hline
\qquad \qquad & Frame Condition & Corresponding Axiom\\
\hline
(1) & $\serc := \forall w \exists u (w R u) \quad$ & $\Box \phi \imp \dia \phi$\\
(2) & $\gpc(n,k) := \forall w, u, v (w R^{n} u \land w R^{k} v \imp u R v) \ $ & $\dia^{k} \phi \imp \Box^{n} \dia \phi$\\
(3) & $\idc := \forall w, u (wRu \imp D(w) \subseteq D(u))$ & $\Box \forall x \phi \imp \forall x \Box \phi \ $\\
(4) & $\ddc := \forall w, u (wRu \imp D(u) \subseteq D(w))$ & $\forall x \Box \phi \imp \Box \forall x \phi$\\
(5) & $\cdc := \forall w, u (wRu \imp D(u) = D(w))$ & $\forall x \phi \imp \phi[y/x]$\\
(6) & $\nedc := \forall w \exists x (x \in D(w))$ & $\forall x \phi \imp \exists x \phi$\\
\hline
\end{tabular}
\egroup
\end{center}

\caption{Axioms and their related frame conditions. We note that when $n=0$, the second frame condition is $\forall w, v (w R^{k} v \imp w R v)$, when $k = 0$, the frame condition is $\forall w, u (w R^{n} u \imp u R w)$, and when $n = k = 0$, the frame condition is $\forall w (w R w)$.}
\label{fig:axioms-related-conditions}
\end{figure}

\subsection{Grammar-theoretic Preliminaries}

As will be seen 
 later on, a significant aspect of our nested calculi is the incorporation of inference (viz. reachability) rules whose applicability depends upon certain strings generated by a formal grammar. Therefore, the current section introduces the grammar-theoretic notions required to properly define such rules.

We define our \emph{alphabet} $\albet := \{\fd,\bd\}$ to be the set of \emph{characters} $\fd$ and $\bd$, and we let $\ques \in \albet$. We make use of the symbols $\fd$ and $\bd$ to encode information in inference rules about worlds in the \emph{future} and \emph{past} (resp.) of an accessibility relation $R$ in a model. We note that these symbols were adapted from the work in~\cite{GorPosTiu11}, and their usage mirrors their analogous meaning in the context of tense logics~\cite{GorPosTiu11,Kas94}. Additionally, we say that $\fd$ and $\bd$ are \emph{converses}, and define the \emph{converse operation} $\conv{\cdot}$ (on characters) as $\conv{\fd} := \bd$ and $\conv{\bd} := \fd$.

We let $\concat$ denote the typical \emph{concatenation operation} with $\varepsilon$ the \emph{empty string}. The set $\albet^{*}$ of \emph{strings over $\albet$} is defined to be the smallest set satisfying the following conditions: (i) $\albet \cup \{\varepsilon\} \subseteq \albet^{*}$, and (ii) $\text{If } \stra \in \albet^{*} \text{ and } \ques \in \albet \text{, then } \stra \concat \ques \in \albet^{*}$. We use $\stra$, $\strb$, $\strc$, \etc (potentially annotated) to denote strings from $\albetstr$. Also, we will often write the concatenation of two strings $s$ and $t$ as $\stra \cate \strb$ as opposed to $\stra \concat \strb$, and for the empty string $\empstr$, we have $\stra \cate \empstr = \empstr \cate \stra = \stra$. The converse operation on strings (adapted from~\cite{TiuIanGor12}) is defined on strings accordingly: (i) $\conv{\varepsilon} := \varepsilon$, and (ii) $\text{If } \stra = \ques_{1} \cdots \ques_{n} \text{, then } \conv{\stra} := \conv{\ques}_{n} \cdots \conv{\ques}_{1}$.

We now define \emph{$\albet$-systems}, which are special kinds of \emph{semi-Thue systems}~\cite{Pos47} that encode information about the conditions imposed on frames and models, and which will parameterize inference rules in our proof systems later on.

\begin{definition}[$\albet$-System]\label{def:grammar} We define a \emph{$\albet$-system} to be a set $\g$ of \emph{production rules} of the form $\ques \pto \stra$, where $\ques \in \albet$ and $\stra \in \albetstr$.
\end{definition}

Specific $\albet$-systems will be of use to us in this paper. In particular, for a set $\gpset$ of generalized path conditions, we define the $\albet$-system $\g(\gpset)$ as follows: $(\dia \pto \bd^{n} \cate \dia^{k}), (\bd \pto \bd^{k} \cate \fd^{n}) \in \g(\gpset)$ \iffi $\gpc(n,k) \in \gpset$. 
 We also define the $\albet$-systems $\sfour := \{\fd \pto \empstr, \bd \pto \empstr, \fd \pto \fd \fd, \bd \pto \bd \bd\}$ and $\sfive := \{\fd \pto \empstr, \bd \pto \empstr, \fd \pto \bd \fd, \bd \pto \bd \fd\}$, and describe their significance shortly; first, we define the notion of a \emph{derivation} and a \emph{language} in the context of $\albet$-systems.



\begin{definition}[Derivation, Language]\label{def:semi-thue-deriv-lang}
 Let $\g$ be a $\albet$-system. We write $\stra \pto \strb$ and say that  the string $\strb$ may be derived from the string $\stra$ in $\albetstr$ in \emph{one-step} \iffi there are strings $\stra', \strb' \in \albetstr$ and $\ques \pto \strc \in \thuesys$ such that $\stra = \stra' \cate \ques \cate \strb'$ and $\strb = \stra' \cate \strc \cate \strb'$. We define the \emph{derivation relation} $\dr$ to be the reflexive and transitive closure of $\pto$. For $\stra, \strb \in \albetstr$, we call $\stra \dr \strb$ a \emph{derivation of $\strb$ from $\stra$}, and define the \emph{length} of a derivation to be the minimal number of one-step derivations required to derive $\strb$ from $\stra$ in $\g$. Last, we define the \emph{language} $\glang(\stra) := \{\strb \ | \ \stra \dr \strb \}$, where $\stra \in \albet^{*}$.
\end{definition}
 
 As noted in~\cite[p.~143]{Lyo21thesis}, $\sfour$ encodes a notion of \emph{directed} reachability and $\sfive$ encodes a notion of \emph{undirected} reachability. Given a frame, recognizing what worlds are reachable from one another along paths of the accessibility relation $R$ is relevant in our context as frame conditions such as $\gpc(n,k)$ connect initial and terminal worlds, and $\idc$ and $\ddc$ require domains to contain certain elements along $R$-paths. To see how $\sfour$ and $\sfive$ encode notions of (un)directed reachability, let us think of $\fd$ as representing a forward move along the accessibility relation of a model, and $\bd$ as representing a backward move along the accessibility relation. Then, $\sfour$ defines the languages $\langsfour(\fd) = \{\fd^{n} \ | \ n \in \mathbb{N}\}$ and $\langsfour(\bd) = \{\bd^{n} \ | \ n \in \mathbb{N}\}$, where each string $\fd^{n}$ and $\bd^{n}$ (denoting a sequence of $n$ $\fd$'s and $n$ $\bd$'s, respectively, with $\fd^{0} = \bd^{0} = \empstr$) `connects' a world $w$ and $u$ if $u$ can be reached in $n$ forward moves or backward moves (resp.) along the accessibility relation from $w$. Similarly, $\sfive$ defines the language $\langsfive(\fd) = \langsfive(\bd) = \albetstr$ where each string $\ques_{1} \cdots \ques_{n}$ `connects' a world $w$ to $u$ if $u$ can be reached in $n$ forward and/or backward moves along the accessibility relation from $w$.


\section{Labeled Sequent Systems}\label{sec:labeled-calc}


We present labeled sequent systems within our logical signature, based on those of~\cite[\cptr~6]{Vig00} and~\cite[\cptr~12.1]{NegPla11}, for first-order modal logics. These systems utilize a denumerable set $\lab := \{w,u,v,\ldots\}$ of \emph{labels} (which are occasionally annotated) and are comprised of inference rules that encode the semantics of first-order modal logics. Moreover, such inference rules operate on \emph{labeled sequents}, that is, formulae of the form $\Lambda := \rel, \Gamma \sar \Delta$, where $\rel$ is a multiset containing both \emph{relational atoms} of the form $wRu$ and \emph{domain atoms} of the form $x \in D(w)$, and $\Gamma$ and $\Delta$ are multisets of \emph{labeled formulae} of the form $w : \phi$, where $x$ ranges over $\var$, $w$ and $u$ range over $\lab$, and $\phi$ ranges over $\lang$. Given a multiset $\rel$ of relational atoms, we let $\lab(\rel)$ be the set all labels occurring in $\rel$.  

In addition, we define the \emph{composition} of two labeled sequents as
$$
(\rel_{1}, \Gamma_{1} \sar \Delta_{1}) \seqcomp (\rel_{2}, \Gamma_{2} \sar \Delta_{2}) := \rel_{1}, \rel_{2}, \Gamma_{1}, \Gamma_{2} \sar \Delta_{1}, \Delta_{2}
$$
 and define: (i) $\vec{x} \in D(w) := x_{1} \in D(w), \ldots, x_{n} \in D(w)$ for $\vec{x} = x_{1}, \ldots,x_{n}$, 
 (ii) $\Gamma \restriction w := \{\phi \ | \ w : \phi \in \Gamma\}$, and (iii) $w : \Gamma := \{w : \phi \ | \ \phi \in \Gamma \subseteq \lang\}$.\footnote{We use $\Gamma$ and $\Delta$ to represent (1) multisets of labeled formulae in the context of labeled sequents and (2) multisets of $\lang$-formulae in the context of nested sequents, and thus, the context determines the usage.} Last, to increase notational concision, we define a sequence of relational atoms $wR^{n}u := wRw_{1}, w_{1}Rw_{2}, \ldots, w_{n-1}Ru$ for some labels $w_{i}$ with $1 \leq i \leq n$ and note that $wR^{0}u := (w = u)$. 

\begin{figure}[t]

\begin{center}
\begin{tabular}{c c c} 
\AxiomC{}
\RightLabel{$\id$}
\UnaryInfC{$\rel, \Gamma, w : p(\vec{x}) \sar w : p(\vec{x}), \Delta$}
\DisplayProof

&

\AxiomC{$\rel, \Gamma \sar w : \phi, \Delta$}
\RightLabel{$\negl$}
\UnaryInfC{$\rel, \Gamma, w : \neg \phi \sar \Delta$}
\DisplayProof

&

\AxiomC{$\rel, \Gamma \sar w : \phi, w : \psi, \Delta$}
\RightLabel{$\disr$}
\UnaryInfC{$\rel, \Gamma \sar w : \phi \lor \psi, \Delta$}
\DisplayProof
\end{tabular}
\end{center}

\begin{center}
\begin{tabular}{c c}
\AxiomC{$\rel, \vec{x} \in D(w), \Gamma, w : p(\vec{x}) \sar \Delta$}
\RightLabel{$\domr$}
\UnaryInfC{$\rel, \Gamma, w : p(\vec{x}) \sar \Delta$}
\DisplayProof

&

\AxiomC{$\rel, \Gamma, w : \phi \sar \Delta$}
\AxiomC{$\rel, \Gamma, w : \psi \sar \Delta$}
\RightLabel{$\disl$}
\BinaryInfC{$\rel, \Gamma, w : \phi \vee \psi \sar \Delta$}
\DisplayProof
\end{tabular}
\end{center}

\begin{center}
\begin{tabular}{c c c}
\AxiomC{$\rel, w R u, \Gamma, u : \phi \sar \Delta$}
\RightLabel{$\dial^{\dag_{1}}$}
\UnaryInfC{$\rel, \Gamma, w : \dia \phi \sar \Delta$}
\DisplayProof

&

\AxiomC{$\rel, w R u, \Gamma \sar w : \dia \phi, u : \phi, \Delta$}
\RightLabel{$\diar$}
\UnaryInfC{$\rel, w R u, \Gamma \sar w : \dia \phi, \Delta$}
\DisplayProof

&

\AxiomC{$\rel, \Gamma, w : \phi \sar \Delta$}
\RightLabel{$\negr$}
\UnaryInfC{$\rel, \Gamma \sar w : \neg \phi, \Delta$}
\DisplayProof
\end{tabular}
\end{center}

\begin{center}
\begin{tabular}{c c}
\AxiomC{$\rel, y \in D(w), \Gamma, w : \phi[y/x] \sar \Delta$}
\RightLabel{$\existsl^{\dag_{2}}$}
\UnaryInfC{$\rel, \Gamma, w : \exists x \phi \sar \Delta$}
\DisplayProof

&

\AxiomC{$\rel, y \in D(w), \Gamma \sar w : \exists x \phi, w : \phi[y/x], \Delta$}
\RightLabel{$\existsr$}
\UnaryInfC{$\rel, y \in D(w), \Gamma \sar w : \exists x \phi, \Delta$}
\DisplayProof
\end{tabular}
\end{center}

\caption{The labeled calculus $\lqk$. The side condition $\dag_{1}$ ($\dag_{2}$) states that the rule is applicable only if $u$ ($y$, resp.) is \emph{fresh}, i.e. $u$ ($y$, resp.) does not occur in the conclusion of the rule.}
\label{fig:labeled-calculi}
\end{figure}

The labeled calculus $\lqk$ for the logic $\qk$ is given in \fig~\ref{fig:labeled-calculi}, and consists of the \emph{initial rule} $\ax$, pairs of left and right \emph{logical rules} for each logical connective, as well as the \emph{domain shift rule} $\domr$, which captures the semantic condition that 
 $V(p,w) \subseteq D(w)^{n}$ 
 in any model.\footnote{We note that the domain shift rule is missing from the labeled systems mentioned in~\cite[\cptr~6]{Vig00} and~\cite[\cptr~12.1]{NegPla11}, and thus, it is not clear if such systems are complete.} To obtain labeled systems for extensions of $\qk$ with a set $\fcset$ of frame conditions, i.e. for a logic $\ql$, we extend $\lqk$ with \emph{relational rules}, presented in \fig~\ref{fig:relational-rules}, and which are types of \emph{geometric structural rules}~\cite[p.~126]{Sim94}. 
 We refer to formulae that are explicitly presented in the premise(s) and conclusion of a rule as \emph{active}; e.g. $y \in D(w)$, $w : \phi[y/x]$, and $w : \exists x \phi$ are active in $\existsl$. Also, the \emph{height} of a proof is defined in the usual way as the longest sequence of sequents from the conclusion of a proof to an initial sequent (cf.~\cite{NegPla11}). 
 
 When extending $\lqk$ with relational rules, we refer to the resulting calculus as $\lql$, which is obtained in the following manner: (1) $\D \in \lql$ \iffi $\serc \in \fcset$, (2) $\sa \in \lql$ \iffi $\gpc(n,k) \in \fcset$, (3) $\idom \in \lql$ \iffi $\idc \in \fcset$, (4) $\ddom \in \lql$ \iffi $\ddc \in \fcset$, (5) $\cdom \in \lql$ \iffi $\cdc \in \fcset$, and (6) $\ned \in \lql$ \iffi $\nedc \in \fcset$. We make the simplifying assumption that if $\cdom \in \lql$, then $\idom, \ddom \not\in \lql$ since the latter two rules are redundant in the presence of the former, and we note that $\mathsf{G3QK}(\emptyset) = \lqk$. Furthermore, when $n=0$ or $k=0$ in $\sa$, i.e. when $\gpc(0,k) \in \fcset$, $\gpc(n,0) \in \fcset$, or $\gpc(0,0) \in \fcset$, the relational rules $(g_{0,k})$, $(g_{n,0})$, and $(g_{0,0})$ are respectively defined as:


\begin{center}
\begin{tabular}{c  @{\hskip 1em} c @{\hskip 1em} c}
\AxiomC{$\rel, w R^{k} v, w R v,  \Gamma \sar \Delta$}
\UnaryInfC{$\rel, w R^{k} v, \Gamma \sar \Delta$}
\DisplayProof

&

\AxiomC{$\rel, w R^{n} u, u R w,  \Gamma \sar \Delta$}
\UnaryInfC{$\rel, w R^{n} u, \Gamma \sar \Delta$}
\DisplayProof

&

\AxiomC{$\rel, w R w,  \Gamma \sar \Delta$}
\UnaryInfC{$\rel, \Gamma \sar \Delta$}
\DisplayProof
\end{tabular}
\end{center}

\begin{theorem}[$\lql$ Soundness and Completeness]\label{thm:lika-sound-complete} 
 $ \vdash w : \phi$ is derivable in $\lql$ \ifandonlyif $\phi \in \ql$.
\end{theorem}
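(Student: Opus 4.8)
The plan is to prove soundness and completeness of $\lql$ by the standard route for labeled calculi: soundness by induction on the height of a derivation, checking each rule preserves validity over $\mathcal{F}_{\fcset}$; and completeness via a canonical-model-style argument (equivalently, by reduction to a known complete system such as those of~\cite{Vig00,NegPla11}, since the excerpt bills $\lql$ as an "equivalent variant"). I would first fix the notion of a labeled sequent being valid: a sequent $\R, \Gamma \sar \Delta$ is valid over $\mathcal{F}_{\fcset}$ iff for every model $M = (W,R,D,V)$ based on a frame in $\mathcal{F}_{\fcset}$ and every assignment $\asmt$ and every interpretation $f : \lab \to W$ of labels satisfying all relational atoms $wRu \in \R$ (i.e. $f(w) R f(u)$) and all domain atoms $x \in D(w) \in \R$ (i.e. $\asmt(x) \in D(f(w))$), there is some $w : \phi \in \Gamma$ with $M, f(w), \asmt \not\Vdash \phi$ or some $w : \psi \in \Delta$ with $M, f(w), \asmt \Vdash \psi$.

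For soundness, I would proceed by induction on proof height. The base cases $\ax$ and $\botl$ are immediate from the semantic clauses. For the propositional and modal logical rules the argument is routine and mirrors the propositional labeled-calculus case; the eigenvariable conditions $\dag_1, \dag_2$ are used exactly as usual — e.g. for $\diar$ one uses the witnessing $u$ provided by $f$ (which exists because $wRu \in \R$ is in the conclusion), and for $\dial$ one picks the $R$-successor guaranteed by $M,f(w),\asmt \Vdash \dia\phi$ and extends $f$ to the fresh label $u$. For $\existsl$/$\existsr$ one similarly uses the domain atom $y \in D(w)$: in $\existsl$, $M,f(w),\asmt \Vdash \exists x\phi$ gives a witness $d \in D(f(w))$, and one extends $\asmt$ by $\asmt[d/y]$ (legitimate since $y$ is fresh) so that $y \in D(w)$ and $w:\phi[y/x]$ hold; in $\existsr$ the domain atom in the conclusion guarantees $\asmt(y) \in D(f(w))$, so the witness is available. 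The new content is the relational rules: for each one I must check that the corresponding frame condition in $\mathcal{F}_{\fcset}$ lets me pass from a model satisfying the premise's relational atoms to one satisfying the conclusion's. This is direct: $\D$ uses seriality (plus freshness of $u$ to extend $f$), $(g_{n,k})$ uses $\gpc(n,k)$ applied to the witnessing chains for $wR^nu$ and $wR^kv$ (and the degenerate cases $(g_{0,k}),(g_{n,0}),(g_{0,0})$ use the degenerate forms of the condition noted in Figure~\ref{fig:axioms-related-conditions}), $\idom$ uses $\idc$, $\ddom$ uses $\ddc$, $\cdom$ both, and $\ned$ uses $\nedc$ (with freshness of $x$). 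In each case the premise sequent has strictly more relational/domain atoms (or a fresh label/variable) but the same labeled formulae, so validity transfers.

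For completeness I would argue contrapositively: if $\phi \notin \ql$ then there is a model $M$ based on $F \in \mathcal{F}_{\fcset}$, a world $w$, and an assignment $\asmt$ with $M,w,\asmt \not\Vdash \phi$; I must show $\sar w : \phi$ is not derivable. The cleanest route is the standard one for $\mathsf{G3}$-style labeled calculi: run root-first (bottom-up) proof search on $\sar w : \phi$, using the invertibility of all rules (which itself follows from height-preserving admissibility of weakening and the shape of the $\mathsf{G3}$ rules), and show that a failed, saturated branch yields a counter-model in $\mathcal{F}_{\fcset}$ — the labels become worlds, the relational atoms the accessibility relation, the domain atoms the domains, and the saturation conditions force the Kripke clauses and the relevant frame conditions to hold (the relational rules being present is exactly what makes the saturated $\R$ satisfy $\fcset$). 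Alternatively, and more economically, I would cite soundness and completeness of the systems of~\cite[Ch.~6]{Vig00} and~\cite[Ch.~12.1]{NegPla11} and show our $\lql$ proves the same labeled sequents by exhibiting mutual simulations of the rule sets (our variants differ only cosmetically within the fixed signature). I expect the main obstacle to be the completeness direction done from scratch: one must carefully handle the interaction of the domain atoms with the quantifier saturation (ensuring every existential on a branch gets a domain-atom witness and that the constructed $\asmt$ respects all domain atoms simultaneously), and, when $\gpset$ contains path conditions, argue that fair proof search still produces a saturated branch despite the relational rules generating new relational atoms — i.e. that the counter-model construction terminates in the limit. This is why invoking the equivalence with the already-established systems of~\cite{Vig00,NegPla11} is the preferable strategy, relegating the bulk of the work to those references.
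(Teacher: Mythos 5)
Your proposal is correct and takes essentially the same route as the paper: the paper's proof is a one-line appeal to the soundness and completeness theorems of Negri and von Plato (Theorems 12.13 and 12.14 of the cited reference), i.e., soundness by induction on derivation height checking that each (relational) rule preserves validity over $\mathcal{F}_{\fcset}$, and completeness in the standard $\mathsf{G3}$-style manner, which is precisely what you outline. The additional care you flag about domain-atom saturation and fairness of proof search under the $\sa$ rules is sensible but is subsumed by the cited development, exactly as your preferred strategy suggests.
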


\begin{proof}
 Similar to the proofs of 
 \thm~12.13 and 
 \thm~12.14 in~\cite[\cptr~12.1]{NegPla11}.
\end{proof}

\begin{figure}[t]

\begin{center}
\begin{tabular}{c c}
\AxiomC{$\rel, wRu, \Gamma \sar \Delta$}
\RightLabel{$\D^{\dag_{1}}$}
\UnaryInfC{$\rel, \Gamma \sar \Delta$}
\DisplayProof

&

\AxiomC{$\rel, w R^{n} u, w R^{k} v, u R v,  \Gamma \sar \Delta$}
\RightLabel{$\sa$}
\UnaryInfC{$\rel, w R^{n} u, w R^{k} v, \Gamma \sar \Delta$}
\DisplayProof
\end{tabular}
\end{center}

\begin{center}
\begin{tabular}{c c}
\AxiomC{$\rel, w R u, x \in D(w), x \in D(u), \Gamma \sar \Delta$}
\RightLabel{$\idom$}
\UnaryInfC{$\rel, w R u, x \in D(w), \Gamma \sar \Delta$}
\DisplayProof

&

\AxiomC{$\rel, w R u, x \in D(w), x \in D(u), \Gamma \sar \Delta$}
\RightLabel{$\ddom$}
\UnaryInfC{$\rel, w R u, x \in D(u), \Gamma \sar \Delta$}
\DisplayProof
\end{tabular}
\end{center}

\begin{center}
\begin{tabular}{c c}
\AxiomC{$\rel, x \in D(w), \Gamma \sar \Delta$}
\RightLabel{$\cdom$}
\UnaryInfC{$\rel, \Gamma \sar \Delta$}
\DisplayProof

&

\AxiomC{$\rel, x \in D(w), \Gamma \sar \Delta$}
\RightLabel{$\ned^{\dag_{2}}$}
\UnaryInfC{$\rel, \Gamma \sar \Delta$}
\DisplayProof
\end{tabular}
\end{center}

\caption{Relational rules. The side condition $\dag_{1}$ states that $\D$ is applicable only if $u$ is fresh, and $\dag_{2}$ states that $\ned$ is applicable only if $x$ is fresh.}
\label{fig:relational-rules}
\end{figure}


\section{Structural Refinement}\label{sec:struc-refinement}

 We now \emph{structurally refine} the labeled systems of the previous section, that is, we expand each calculus $\lql$ with a set of reachability rules (generalized versions of the $\diar$ and $\existsl$ rules) that permit the elimination of relational rules (viz. $\sa$, $\idom$, $\ddom$, $\cdom$, and $\ned$) from any given derivation. This process is an instance of a methodology introduced in~\cite{Lyo21thesis} 
 to derive nested systems from labeled systems, and begets `refined' labeled calculi (each dubbed $\qll$), which only require labeled sequents of a `treelike' structure in proofs.

The reachability rules we introduce 
 possess a rather complex functionality that will be detailed in the first half of this section with an example given below. In particular, our reachability rules view labeled sequents as automata, and enable formulae to be (bottom-up) introduced to a labeled sequent given that certain paths of relational atoms (corresponding to strings generated by a $\albet$-system) exist in a sequent. To make the functionality of such rules precise, we define \emph{propagation graphs} and \emph{propagation paths}---concepts initially defined in the context of display and nested calculi~\cite{GorPosTiu11}, and subsequently transported to the setting of labeled calculi~\cite{CiaLyoRamTiu21,LyoBer19}.

\begin{definition}[Propagation Graph]\label{def:propagation-graph} We define a \emph{propagation graph} of a labeled sequent $\Lambda := \rel, \Gamma \sar \Delta$ to be a pair $\prgr{\rel} := (V,E)$ such that (i) $(w,\vx) \in V$ \iffi $w \in \lab(\rel)$, $\vx = \{x_{1}, \ldots, x_{n}\} \subset \var$, and $x_{1} \in D(w), \ldots, x_{n} \in D(w)$ are all domain atoms in $\rel$ associated with the label $w$, and (ii) $(w,\fd,u),(u,\bd,w) \in E$ \iffi $wRu \in \rel$. We will often write $w \in \prgr{\rel}$ to mean $w \in \prgrdom$, and $(w,\ques,u) \in \prgr{\rel}$ to mean $(w,\ques,u) \in \prgredges$.
\end{definition}

\begin{definition}[Propagation Path~\cite{Lyo21b}]\label{def:propagation-path} Let $\Lambda := \rel, \Gamma \sar \Delta$ be a labeled sequent. We define a \emph{propagation path from $w_{1}$ to $w_{n}$ in $\prgr{\rel} := (V,E)$} to be a sequence of the following form:
$$
\ppath(w_{1},w_{n}) := w_{1}, \ques_{1}, w_{2}, \ques_{2}, w_{3}, \ldots , \ques_{n-1}, w_{n}
$$
such that $(w_{1}, \ques_{1}, w_{2}) , (w_{2}, \ques_{2}, w_{3}), \ldots, (w_{n-1}, \ques_{n-1}, w_{n}) \in \prgredges$. Given a propagation path as above, we define its \emph{converse} as shown below left and its \emph{string} as shown below right:
$$
\conv{\ppath}(w_{n},w_{1}) := w_{n}, \conv{\ques}_{n-1}, w_{n-1}, \conv{\ques}_{n-2}, \ldots, \conv{\ques}_{1}, w_{1}
\qquad
\stra_{\ppath}(w_{1},w_{n}) := \ques_{1} \cate \ques_{2} \cate \cdots \cate \ques_{n-1}
$$
 We let $\emppath(w,w) := w$ be an \emph{empty path} with its string defined as $\stra_{\emppath}(w,w) := \empstr$.
\end{definition}

 With the above notions, we can now define the operation of our reachability rules $\prdia$, $\prexi$, and $\prexii$ which are displayed in \fig~\ref{fig:reachability-rules}. 
 The side conditions dictating the operation of such rules are specified in \dfn~\ref{def:side-conditions} below. As mentioned previously, reachability rules incorporate two kinds of functionality: (i) the propagation of formulae along paths within a sequent, and (ii) the search of data along paths within a sequent. Rules that only incorporate the first kind of functionality occur in the literature, and constitute \emph{propagation rules}~\cite{CasCerGasHer97,Fit72,GorPosTiu11,Sim94}. In this regard, $\prdia$ serves as a propagation rule proper. On the other hand, the new rules $\prexi$ and $\prexii$ serve as reachability rules, exhibiting the second kind of functionality mentioned above. 

\begin{figure}[t]

\begin{center}
\begin{tabular}{c c}
\AxiomC{$\rel, \Gamma \sar w : \dia \phi, u : \phi, \Delta$}
\RightLabel{$\prdia^{\dag_{1}(\fcset)}$}
\UnaryInfC{$\rel, \Gamma \sar w : \dia \phi, \Delta$}
\DisplayProof

&

\AxiomC{$\rel, \Gamma \sar w : \phi[y/x], w : \exists x \phi, \Delta$}
\RightLabel{$\prexi^{\dag_{2}(\fcset)}$}
\UnaryInfC{$\rel, \Gamma \sar w : \exists x \phi, \Delta$}
\DisplayProof
\end{tabular}
\end{center}
\begin{center}
\begin{tabular}{c}
\AxiomC{$\rel, y \in D(u), \Gamma \sar w : \phi[y/x], w : \exists x \phi, \Delta$}
\RightLabel{$\prexii^{\dag_{3}(\fcset)}$}
\UnaryInfC{$\rel, \Gamma \sar w : \exists x \phi, \Delta$}
\DisplayProof
\end{tabular}
\end{center}

\caption{Labeled reachability rules. The side condition $\dag_{1}(\fcset)$ is defined in \dfn~\ref{def:side-conditions}, $\dag_{2}(\fcset)$ is defined in \fig~\ref{fig:prex-side-cond}, and $\dag_{3}(\fcset)$ is defined in \fig~\ref{fig:prex-2-side-cond}.}
\label{fig:reachability-rules}
\end{figure}

 Each rule $\prdia$, $\prexi$, and $\prexii$ relies on a side condition (see \dfn~\ref{def:side-conditions} below) that may check to see if a propagation path of a certain shape exists within the propagation graph of a labeled sequent. For a $\albet$-system $\g$ and a labeled sequent $\Lambda := \rel, \Gamma \sar \Delta$, such conditions typically involve a statement of the following form: \emph{There exists a propagation path $\ppath(w,u)$ from $w$ to $u$ in the propagation graph $\prgr{\rel}$ such that the string $\stra_{\ppath}(w,u)$ is in the language $\glang(\ques)$.} We express this statement in a more concise manner as shown below:
$$
\exists \ppath(w,u) \in \prgr{\rel} \big(\stra_{\ppath}(w,u) \in \glang(\ques)\big)
$$
We remark that $\prexi$ and $\prexii$ also rely on \emph{availability} notions (cf.~\cite{Fit14,Lyo21thesis}), which employ statements of the above shape.

\begin{definition}[$(\g,\ques)$-available]\label{def:S4-S5-available} Let $\Lambda := \rel, \Gamma \sar \Delta$ be a labeled sequent and $\g$ be a $\albet$-system. We define a variable $x$ to be \emph{$(\g,\ques)$-available} for a label $w$ in $\Lambda$ \iffi for some label $u$, $(u,\vx) \in \prgr{\rel}$ with $x \in \vx$ such that $\exists \ppath(w,u) \in \prgr{\rel} \big(\stra_{\ppath}(w,u) \in L_{\g}(\ques)\big)$.
\end{definition}

 Let us now formally specify the side conditions $\dag_{1}(\fcset)$ -- $\dag_{3}(\fcset)$ imposed on the $\prdia$, $\prexi$, and $\prexii$ rules, which depend upon the frame conditions present in $\fcset$.

\begin{definition}[Side Conditions]\label{def:side-conditions} Let $\gpset = \{\gpc(n,k) \ | \ \gpc(n,k) \in \fcset\}$. Then, for the $\prdia$ rule, the side condition is defined accordingly:
$$
\dag_{1}(\fcset) := \exists \ppath(w,u) \in \prgr{\rel} (\stra_{\ppath}(w,u) \in L_{\thuesys(\gpset)}(\fd)).
$$
 For the $\prexi$ and $\prexii$ rules, the side condition depends on if $\idc$, $\ddc$, or $\cdc$ occur within $\fcset$ or not. The conditions for $\prexi$ and $\prexii$ are presented in \fig~\ref{fig:prex-side-cond} and \fig~\ref{fig:prex-2-side-cond}, respectively.
\end{definition}

\begin{figure}[t]
\begin{center}
\bgroup
\def\arraystretch{1.1}
\begin{tabular}{| l | l |}
\hline
 Property of $\fcset$ & Corresponding Side Condition for $\prexi$\\
\hline
$\idc, \ddc \in \fcset$, $\cdc \not\in \fcset$ & $\dag_{2}(\fcset) :=$  ``$y$ is $(\sfive,\fd)$-available for $w$.''\\
$\idc \in \fcset$, $\ddc, \cdc \not\in \fcset$ & $\dag_{2}(\fcset) :=$  ``$y$ is $(\sfour \cup \thuesys(\gpset),\bd)$-available for $w$.''\\
$\ddc \in \fcset$, $\idc, \cdc \not\in \fcset$ & $\dag_{2}(\fcset) :=$  ``$y$ is $(\sfour \cup \thuesys(\gpset),\fd)$-available for $w$.''\\
$\idc, \ddc, \cdc \not\in \fcset$ & $\dag_{2}(\fcset) :=$  ``$(y \in D(w)) \in \rel$.''\\
$\cdc \in \fcset$ & $\dag_{2}(\fcset) :=$  ``$y \in \var$.''\\
\hline
\end{tabular}
\egroup
\end{center}

\caption{The side condition $\dag_{2}(\fcset)$ for $\prexi$, which depends on the contents of $\fcset$. Note that there are only five cases above as we have made the simplifying assumption that if $\cdc \in \fcset$, then $\idc,\ddc \not\in \fcset$.}
\label{fig:prex-side-cond}
\end{figure}

\begin{example}\label{ex:propagation-graph-path} Let $\Lambda := wRv, wRu, y \in D(w), z \in D(u) \sar v : \exists x p(x), u : \dia (q \lor r)$ with $\rel := wRv, wRu$. A graphical depiction of $\prgr{\rel}$ is given below: 
\begin{center}
\begin{minipage}[t]{.5\textwidth}
\xymatrix{
  (v,\emptyset)\ar@/^-1pc/@{.>}[rr]|-{\bd} & &  (w,\{y\})\ar@/^1pc/@{.>}[rr]|-{\fd}\ar@/^-1pc/@{.>}[ll]|-{\dia} & &  (u,\{z\})\ar@/^1pc/@{.>}[ll]|-{\bd}
}
\end{minipage}
\end{center}
First, observe that if $\gpc(1,1) \in \fcset$ (i.e. Euclideanity is imposed as a frame condition), then $\fd \pto \bd \cate \fd \in \thuesys(\gpset)$, and since the propagation path $u, \bd, w, \fd, v$ exists in $\prgr{\rel}$ with $\bd \cate \fd \in L_{\thuesys(\gpset)}(\fd)$, it follows that $\prdia$ can be bottom-up applied to $\Lambda$, letting us derive the labeled sequent $wRv, wRu, y \in D(w), z \in D(u) \sar v : \exists x p(x), u : \dia (q \lor r), v : p \lor r$.

Second, if we assume that $\idc \in \fcset$, but $\ddc \not\in \fcset$, then $y$ is $(\sfour \cup \thuesys(\gpset),\bd)$-available for $v$ since there is a propagation path $v, \bd, w$ with $\bd \in L_{\sfour \cup \thuesys(\gpset)}(\bd)$, and $(w,\{y\}) \in \prgr{\rel}$. Hence, $wRv, wRu, y \in D(w), z \in D(u) \sar v : \exists x p(x), v : p(y), u : \dia (q \lor r)$ can be bottom-up derived by an application of $\prexi$.
\end{example}

\begin{remark}\label{rem:diar-existsr-reachability-instances}
The $\diar$ rule is an instance of $\prdia$. If $\cdc \not\in \fcset$, then $\existsr$ is an instance of $\prexi$, and if $\cdc \in \fcset$, then $\existsr$ corresponds to an application of $\prexi$ followed by $\cdom$. Also, notice that if $\idc, \ddc, \cdc \not\in \fcset$, then $\prexi$ is identical to the $\existsr$ rule and $\prexii$ corresponds to an application of $\existsr$ followed by an application of $\ned$ (given that $\nedc \in \fcset$).
\end{remark}

\begin{figure}[t]
\begin{center}
\bgroup
\def\arraystretch{1.1}
\begin{tabular}{| l | l |}
\hline
 Property of $\fcset$ & Corresponding Side Condition for $\prexii$\\
\hline
$\idc, \ddc \in \fcset$, $\cdc \not\in \fcset$ & $\dag_{3}(\fcset) :=$ ``$\exists \ppath(w,u) \in \prgr{\rel} (\stra_{\ppath}(w,u) \in L_{\sfive}(\fd))$ and $y$ is fresh.''\\
$\idc \in \fcset$, $\ddc, \cdc \not\in \fcset$ & $\dag_{3}(\fcset) :=$ ``$\exists \ppath(w,u) \in \prgr{\rel} (\stra_{\ppath}(w,u) \in L_{\sfour \cup \thuesys(\gpset)}(\bd))$ and $y$ is fresh.''\\
$\ddc \in \fcset$, $\idc, \cdc \not\in \fcset$ & $\dag_{3}(\fcset) :=$ ``$\exists \ppath(w,u) \in \prgr{\rel} (\stra_{\ppath}(w,u) \in L_{\sfour \cup \thuesys(\gpset)}(\fd))$ and $y$ is fresh.''\\
$\idc, \ddc, \cdc \not\in \fcset$ & $\dag_{3}(\fcset) :=$  ``$w = u$ and $y$ is fresh.''\\
$\cdc \in \fcset$ & $\dag_{3}(\fcset) :=$  ``$y$ is fresh.''\\
\hline
\end{tabular}
\egroup
\end{center}

\caption{The side condition $\dag_{3}(\fcset)$ for $\prexii$, which depends on the contents of $\fcset$.}
\label{fig:prex-2-side-cond}
\end{figure}

\begin{definition}[Refined Labeled Calculus] Let $\gpset = \{\gpc(n,k) \ | \ \gpc(n,k) \in \fcset\}$. We define the \emph{refined labeled calculus} $\qll := (\lql \setminus \mathbf{R}) \cup \{\prdia,\prexi\} \cup \{\prexii \ | \ \nedc \in \fcset\}$, where
$$
\mathbf{R} = \{\sa \ | \ \gpc(n,k) \in \gpset\} \cup \{\idom, \ddom, \cdom, \ned\}.
$$
 In other words, $\qll$ extends $\lql$ with the $\prdia$ and $\prexi$ rules regardless, but only includes $\prexii$ \iffi $\nedc \in \fcset$, and excludes the $\sa$, $\idom$, $\ddom$, $\cdom$, and $\ned$ relational rules.
\end{definition}

We now argue that each refined labeled calculus $\qll$ is complete by means of a proof transformation procedure. In particular, we show that through the elimination of relational rules and the introduction of reachability rules, any proof in $\lql$ can be transformed into a proof in $\qll$. To provide intuition, we begin by showing how a concrete proof in $\lql$ can be transformed into a proof in $\qll$, and afterward, we prove in general that such a transformation can always be performed.

\begin{example}\label{ex:permutation} Let $\fcset := \{\idc,\gpc(0,2)\}$ with $\gpset = \{\gpc(0,2)\}$, i.e. $\fcset$ contains the increasing domain condition and transitivity as frame conditions. Therefore, $\prdia$ is parameterized by the $\albet$-system $\thuesys(\gpset) = \{\fd \pto \fd \fd, \bd \pto \bd \bd\}$, $\prexi$ is parameterized by the $\albet$-system $\sfour \cup \thuesys(\gpset)$, and $\prexii$ does not occur in $\qll$ as $\nedc \not\in \fcset$. Moreover, suppose we are given the following derivation in $\lql$. We will show how the relational rules $\idom$ and $(g_{0,2})$ can be permuted upward until they are eliminated from a given derivation, yielding a proof in $\qll$.

\begin{center}
\AxiomC{}
\RightLabel{$\id$}
\UnaryInfC{$wRu, uRv, wRv, y \in D(w), y \in D(v), v : p(y) \sar v : \exists x p(x), v : p(y)$}
\RightLabel{$\existsr$}
\UnaryInfC{$wRu, uRv, wRv, y \in D(w), y \in D(v), v : p(y) \sar v : \exists x p(x)$}
\RightLabel{$\idom$}
\UnaryInfC{$wRu, uRv, wRv, y \in D(w), v : p(y) \sar v : \exists x p(x)$}
\RightLabel{$(g_{0,2})$}
\UnaryInfC{$wRu, uRv, y \in D(w), v : p(y) \sar v : \exists x p(x)$}
\DisplayProof
\end{center}

First, as shown in the derivation below, $\idom$ can be permuted above $\existsr$. Observe that if $\idom$ is applied to the initial sequent, then a propagation path $v, \bd, w$ exists in $\prgr{\rel}$, where $\rel = wRu, uRv, wRv$, such that $\bd \in L_{\sfour \cup \thuesys(\gpset)}(\bd)$, showing that $\prexi$ can indeed be applied after $\idom$.

\begin{center}
\AxiomC{}
\RightLabel{$\id$}
\UnaryInfC{$wRu, uRv, wRv, y \in D(w), y \in D(v), v : p(y) \sar v : \exists x p(x), v : p(y)$}
\RightLabel{$\idom$}
\UnaryInfC{$wRu, uRv, wRv, y \in D(w), v : p(y) \sar v : \exists x p(x), v : p(y)$}
\RightLabel{$\prexi$}
\UnaryInfC{$wRu, uRv, wRv, y \in D(w), v : p(y) \sar v : \exists x p(x)$}
\RightLabel{$(g_{0,2})$}
\UnaryInfC{$wRu, uRv, y \in D(w), v : p(y) \sar v : \exists x p(x)$}
\DisplayProof
\end{center}

We now observe that the conclusion of $\idom$ in the proof above is an instance of $\id$, and hence, the application of $\idom$ may be eliminated and replaced by an instance of $\id$ as shown below. Furthermore, as is also shown below, $(g_{0,2})$ may be permuted above $\prexi$, since if $(g_{0,2})$ is applied to the conclusion of $\id$, then there exists a propagation path $v, \bd, u, \bd, w$ in $\prgr{\rel'}$, where $\rel' = wRu, uRv$, such that $\bd \bd \in L_{\sfour \cup \thuesys(\gpset)}(\bd)$. Hence, the side condition of $\prexi$ continues to hold after $(g_{0,2})$ is applied.

\begin{center}
\AxiomC{}
\RightLabel{$\id$}
\UnaryInfC{$wRu, uRv, wRv, y \in D(w), v : p(y) \sar v : \exists x p(x), v : p(y)$}
\RightLabel{$(g_{0,2})$}
\UnaryInfC{$wRu, uRv, y \in D(w), v : p(y) \sar v : \exists x p(x), v : p(y)$}
\RightLabel{$\prexi$}
\UnaryInfC{$wRu, uRv, y \in D(w), v : p(y) \sar v : \exists x p(x)$}
\DisplayProof
\end{center}

Last, we observe that the conclusion of $(g_{0,2})$ is an instance of $\id$, meaning $(g_{0,2})$ can be eliminated from the proof and replaced by $\id$, as shown below.

\begin{center}
\AxiomC{}
\RightLabel{$\id$}
\UnaryInfC{$wRu, uRv, y \in D(w), v : p(y) \sar v : \exists x p(x), v : p(y)$}
\RightLabel{$\prexi$}
\UnaryInfC{$wRu, uRv, y \in D(w), v : p(y) \sar v : \exists x p(x)$}
\DisplayProof
\end{center}

As seen in the above example, all relational rules have been eliminated from a $\lql$ derivation, yielding a derivation in $\qll$. Significantly, observe that the relational atoms $wRu, uRv$ occurring in the two labeled sequents of the proof form an $R$-path of length two, which is a type of tree. This is a characteristic feature of proofs in $\qll$ (proven in \lem~\ref{lem:labeled-tree-derivations} below), namely, that proofs only require labeled sequents of a treelike structure---a fact leveraged to derive nested systems for first-order modal logics in the next section.
\end{example}

\begin{lemma}\label{lem:permutation-1}
Let $\gpc(n,k) \in \fcset$ and $\ru \in \{\prdia, \prexi, \prexii\}$. Then, $\sa$ can be permuted above $\ru$, that is, any $\ru$ inference followed by a $\sa$ inference can be transformed into a $\sa$ inference followed by an $\ru$ inference.
\end{lemma}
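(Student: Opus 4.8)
The plan is to argue by cases on the rule $\ru \in \{\prdia, \prexi, \prexii\}$, showing in each case that an $\ru$ inference immediately followed below by an $\sa$ inference can be rearranged into an $\sa$ inference followed below by the same $\ru$ inference, with the same end sequents. The crucial observation underlying every case is that $\sa$ only \emph{adds} the relational atom $uRv$ to a sequent of the form $\rel, wR^{n}u, wR^{k}v, \Gamma \sar \Delta$ (and symmetrically for the degenerate shapes $(g_{0,k})$, $(g_{n,0})$, $(g_{0,0})$); it never removes relational or domain atoms, and it never touches labeled formulae. Dually, none of the reachability rules alters the relational part $\rel$ except that $\prexii$ discards a single domain atom $y \in D(u)$ on its way down. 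Hence the two inferences act on essentially disjoint parts of the sequent, and the only thing that can go wrong is a clash over a \emph{freshness} side condition or a \emph{propagation-path} side condition.

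Concretely, I would set up the situation as follows. Suppose the given derivation fragment is an $\ru$-inference with conclusion $\rel, wR^{n}u, wR^{k}v, \Gamma \sar \Delta$, where the premise results from $\ru$'s action on the labeled formulae (introducing $w' : \phi[y/x]$, or $w' : \psi$, etc., possibly with an extra domain atom $y \in D(u')$ in the $\prexii$ case), followed by an $\sa$-inference which removes $uRv$ below it. In the proposed transformed derivation I first apply $\sa$ to the \emph{premise} of $\ru$ (deleting the same $uRv$), and then apply $\ru$ to the result. I must check two things: (i) applying $\sa$ first is legal, i.e. the atoms $wR^{n}u$ and $wR^{k}v$ are still present in the premise of $\ru$ — this is immediate since $\ru$ does not delete relational atoms — and (ii) applying $\ru$ \emph{after} $\sa$ is legal, i.e. $\ru$'s side condition still holds once $uRv$ is present. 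For freshness side conditions (the fresh $y$ in $\prexii$, or the fresh variable in the degenerate cases) this is clear because adding a relational atom $uRv$ over existing labels cannot introduce a new occurrence of the fresh variable $y$, and it does not change which labels occur. For the propagation-path side conditions $\dag_{1}(\fcset)$, $\dag_{2}(\fcset)$, $\dag_{3}(\fcset)$, I use monotonicity of the propagation graph: passing from $\prgr{\rel'}$ to $\prgr{\rel' \cup \{uRv\}}$ only adds the edges $(u,\fd,v)$ and $(v,\bd,u)$, hence $\glang(\ques)$-labelled paths that existed before still exist, and the required availability/path condition is preserved. Going in the \emph{other} direction — from the original derivation, where $\sa$ fired \emph{after} $\ru$, so $\ru$'s side condition held in the presence of $uRv$ — is the subtle point, because after the permutation $\ru$ fires \emph{before} $uRv$ is added, i.e. in the smaller graph $\prgr{\rel}$ where $uRv$ is absent.

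So the main obstacle is exactly this: showing that $\ru$'s side condition, which in the original derivation was verified using a propagation path that may \emph{traverse the edge} $uRv$, can be re-verified in the graph that lacks this edge. The key idea here is the one already exhibited in Example~\ref{ex:permutation}: because $\gpc(n,k)$ is a frame condition in $\fcset$, the semi-Thue system $\thuesys(\gpset)$ contains the productions $\fd \pto \bd^{n} \cate \dia^{k}$ and $\bd \pto \bd^{k} \cate \fd^{n}$, so any use of the edge $uRv$ (corresponding to a single $\fd$, read from $u$ to $v$, or a single $\bd$, read from $v$ to $u$) inside a propagation path can be \emph{rewritten} as a detour $u, \bd, w_{n-1}, \ldots, \bd, w, \fd, \ldots, \fd, v$ through the $wR^{n}u$ and $wR^{k}v$ chains, which \emph{are} present in $\prgr{\rel}$; and this detour, by the very production rule used, spells a string still in the appropriate language $\glang(\ques)$. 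Iterating this rewriting on every occurrence of the $uRv$ edge in the witnessing path yields a new witnessing path living entirely in $\prgr{\rel}$, so $\ru$'s side condition is restored. (The degenerate shapes $(g_{0,k})$, $(g_{n,0})$, $(g_{0,0})$ are handled the same way, using the corresponding degenerate productions, e.g. $\fd \pto \dia^{k}$, $\bd \pto \bd^{k}$ when $n=0$.) Finally I would note that in all cases the end sequent of the transformed fragment is literally the same as that of the original fragment — $\ru$ and $\sa$ commute on the syntactic level — so no further bookkeeping is needed, and the lemma follows by a straightforward inspection of each of the three rules $\prdia$, $\prexi$, $\prexii$.
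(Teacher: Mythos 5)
Your proposal is correct and follows essentially the same route as the paper: the core step in both is to take the witnessing propagation path for the reachability rule's side condition and replace every traversal of the edge corresponding to $uRv$ by a detour through the $wR^{n}u$ and $wR^{k}v$ chains, using the productions $\fd \pto \bd^{n} \cate \fd^{k}$ and $\bd \pto \bd^{k} \cate \fd^{n}$ of $\thuesys(\gpset)$ to certify that the rewritten string stays in the relevant language. Aside from a momentary slip in direction (the claim that the side condition must hold ``once $uRv$ is present'' — it is the absence of $uRv$ after the permutation that matters, as you then correctly identify), the argument matches the paper's, which works out the $\prexi$ case in detail and treats $\prexii$ and $\prdia$ analogously.
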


\begin{proof} We prove the case where $\ru$ is $\prexi$ as the case where $\ru$ is $\prexii$ is shown similarly, and the case where $\ru$ is $\prdia$ follows from \cite[\lem~5.8]{CiaLyoRamTiu21}. Therefore, suppose we are given the following:
\begin{center}
\AxiomC{$\rel, y \in D(u'), w R^{n} u, w R^{k} v, u R v,  \Gamma \sar w' : \exists x \phi, w' : \phi[y/x], \Delta$}
\RightLabel{$\prexi$}
\UnaryInfC{$\rel, y \in D(u'), w R^{n} u, w R^{k} v, u R v,  \Gamma \sar w' : \exists x \phi, \Delta$}
\RightLabel{$\sa$}
\UnaryInfC{$\rel, y \in D(u'), w R^{n} u, w R^{k} v,  \Gamma \sar w' : \exists x \phi, \Delta$}
\DisplayProof
\end{center}
 We assume that the relational atom $u R v $ is active in the $\prexi$ inference, since the cases where it is not active are easily resolved as $\sa$ can be freely permuted above $\prexi$. Let $\rel_{1} := \rel,y \in D(u'), w R^{n} u, w R^{k} v, u R v$ and $\rel_{2} := \rel, y \in D(u'), w R^{n} u, w R^{k} v$. We note that since the relational atom $u R v$ is assumed active in $\prexi$, it follows that $u R v$ occurs along a propagation path $\ppath(u',w')$ such that $(u',\vx) \in \prgr{\rel_{1}}$ with $y \in \vx$ and $\exists \ppath(u',w') \in \prgr{\rel_{1}} \big(\stra_{\ppath}(u',w') \in L_{\g \cup \thuesys(\gpset)}(\ques)\big)$. We note that $\g \in \{\sfour, \sfive\}$ and $\ques \in \{\fd,\bd\}$, with both being determined on the basis of if $\idc$ and/or $\ddc$ occur in $\fcset$. We assume that $\ques = \fd$ for simplicity, noting that the proof is similar when $\ques = \bd$.

 We now show that $\exists \ppath'(u',w') \in \prgr{\rel_{2}} (\stra_{\ppath'}(u',w') \in L_{\thuesys \cup \thuesys(\gpset)}(\fd))$ to complete the proof. To do so, we build a propagation path $\ppath'(u',w')$ by replacing each occurrence of $u, \dia, v$ in $\ppath(u',w')$ by 
 $u, \bd, u_{1}, \ldots, u_{n-1}, \bd, w, \fd, w_{1}, \ldots, w_{k-1}, \fd, v$ (corresponding to the relational atoms $w R^{n} u, w R^{k} v \in \rel_{2}$), and replacing each occurrence of $v, \bd, u$ in $\ppath(u',w')$ by $ v, \bd, w_{k-1}, \ldots, w_{1}, \bd, w, \fd, u_{n-1}, \ldots, u_{1}, \fd, u$ (also corresponding to the relational atoms $w R^{n} u, w R^{k} v \in \rel_{2}$). We let $\ppath'(u',w')$ denote the propagation path obtained from $\ppath(u',w')$ in the just described way. As the only difference between $\prgr{\rel_{1}}$ and $\prgr{\rel_{2}}$ is that the former is guaranteed to contain the edges $(u,\fd,v)$ and $(v,\bd,u)$ corresponding to $u R v$, whereas the latter is not, we have that $\ppath'(u',w')$ is a propagation path in $\prgr{\rel_{2}}$ as $\ppath'(u',w')$ omits $u, \fd, v$ and $v,\bd, u$. 

 Last, we additionally need to show that $\stra_{\ppath'}(u',w') \in L_{\thuesys \cup \thuesys(\gpset)}(\fd)$. By assumption, $\stra_{\ppath}(u',w') \in L_{\thuesys \cup \thuesys(\gpset)}(\fd)$, implying that $\fd \pto_{\thuesys \cup \thuesys(\gpset)}^{*} \stra_{\ppath}(u',w')$ by \dfn~\ref{def:semi-thue-deriv-lang}. Also, due to the fact that $\sa$ is a rule in $\lql$, we know that $\fd \pto \bd^{n} \cate \fd^{k},\bd \pto \bd^{k} \cate \fd^{n} \in \thuesys \cup \thuesys(\gpset)$. Applying $\fd \pto \bd^{n} \cate \fd^{k}$ to each occurrence of $\fd$ in $\stra_{\ppath}(u',w')$ corresponding to the edge $(u,\fd,v)$, and applying $\bd \pto \bd^{k} \cate \fd^{n}$ to each occurrence of $\bd$ in $\stra_{\ppath}(u',w')$ corresponding to the edge $(v,\bd,u)$ (with both edges corresponding to the relational atom $u R v$), we obtain the string $\stra_{\ppath'}(u',w') \in L_{\thuesys \cup \thuesys(\gpset)}(\fd)$. Hence, we may permute $\sa$ above $\prexi$, giving us:
 \begin{center}
\AxiomC{$\rel, y \in D(z), w R^{n} u, w R^{k} v, u R v,  \Gamma \sar w' : \exists x \phi, w' : \phi[y/x], \Delta$}
\RightLabel{$\sa$}
\UnaryInfC{$\rel, y \in D(z), w R^{n} u, w R^{k} v,  \Gamma \sar w' : \exists x \phi, w' : \phi[y/x], \Delta$}
\RightLabel{$\prexi$}
\UnaryInfC{$\rel, y \in D(z), w R^{n} u, w R^{k} v,  \Gamma \sar w' : \exists x \phi, \Delta$}
\DisplayProof
\end{center}
\end{proof}


\begin{lemma}\label{lem:permutation-2} (1) If $\idc \in \fcset$, then $\idom$ can be permuted above $\prexi$, 
 and (2) If $\ddc \in \fcset$, then $\ddom$ can be permuted above $\prexi$, (3) If $\cdc \in \fcset$, then $\cdom$ can be permuted above $\prexi$.
\end{lemma}

\begin{proof} Let $\gpset = \{\gpc(n,k) \ | \ \gpc(n,k) \in \fcset\}$.   We prove statement (2) as statement (1) is similar and statement (3) is trivial. Suppose we are given the following:
\begin{center}
\AxiomC{$\rel, y \in D(u), y \in D(v), u R v, \Gamma \sar w : \exists x \phi, w : \phi[y/x], \Delta$}
\RightLabel{$\prexi$}
\UnaryInfC{$\rel, y \in D(u), y \in D(v), u R v, \Gamma \sar w : \exists x \phi, \Delta$}
\RightLabel{$\ddom$}
\UnaryInfC{$\rel, y \in D(v), u R v, \Gamma \sar w : \exists x \phi, \Delta$}
\DisplayProof
\end{center}
 We assume that the domain atom $y \in D(u)$ is active in $\prexi$ above, since the case where it is not active can be easily resolved by permuting $\ddom$ above $\prexi$. Also, we let $\rel_{1} := \rel, y \in D(u), y \in D(v), u R v$ and $\rel_{2} := \rel, y \in D(v), u R v$. By the side condition on $\prexi$, we know that there exists a propagation path $\ppath(w,u)$ such that $(u,\vx) \in \prgr{\rel_{1}}$ with $y \in \vx$ and $\exists \ppath(w,u) \in \prgr{\rel_{1}} \big(\stra_{\ppath}(w,u) \in L_{\sfour \cup \thuesys(\gpset)}(\fd)\big)$. We assume w.l.o.g. that $\ppath(w,u)$ is a minimal propagation path between $w$ and $u$, which either (i) traverses $v$ before it reaches $u$, or (ii) does not traverse $v$ before it reaches $u$. We prove each case in turn:
 
\textit{(i).} If $\ppath(w,u)$ traverses $v$ before it reaches $u$, then the propagation path is of the form $\ppath'(w,v),\bd,u$. We observe that $\ppath'(w,v)$ is a propagation path from $w$ to $v$ in $\rel_{2}$ and that $\stra_{\ppath}(w,u) = \stra_{\ppath'}(w,v) \concat \bd$. Since $\bd \pto \empstr \in \sfour$, it follows that $\stra_{\ppath'}(w,v) \in L_{\sfour \cup \thuesys(\gpset)}(\fd)$, which shows that $\ddom$ can be permuted above $\prexi$ as the side condition of $\prexi$ continues to hold after $\ddom$ is applied.

\textit{(ii).} Suppose $\ppath(w,u)$ does not traverse $v$ before it reaches $u$. If $\ppath(w,u) = \empstr$, then the case is easily resolved, so we assume that $\ppath(w,u)$ is either of the form $\ppath(w,z),\fd,u$ or $\ppath(w,z),\bd,u$. In the former case, since $\ppath'(w,v) = \ppath(w,z),\fd,u,\fd,v$ is a propagation path in $\prgr{\rel_{2}}$, $\stra_{\ppath}(w,u) \in L_{\sfour \cup \thuesys(\gpset)}(\fd)$, and $\fd \pto \fd \fd \in \sfour$, it follows that $\stra_{\ppath'}(w,v) = \stra_{\ppath}(w,u) \concat \fd = \stra_{\ppath}(w,z) \concat \fd \concat \fd \in L_{\sfour \cup \thuesys(\gpset)}(\fd)$, showing that $\ddom$ can be permuted above $\prexi$ in this case. Let us now consider the latter case. If $\ppath(w,u)$ is of the form $\ppath(w,z),\bd,u$ with $\stra_{\ppath}(w,z) \concat \bd \in L_{\sfour \cup \thuesys(\gpset)}(\fd)$, then since $L_{\sfour}(\fd) = \{\fd^{n} \ | \ n \in \mathbb{N}\}$, we know that $\stra_{\ppath}(w,z) \concat \bd \in L_{\sfour \cup \thuesys(\gpset)}(\fd)$ due to an application of a production rule $\fd \pto \bd^{n} \fd^{k}$ with $n \geq 1$ in a derivation $\fd \pto_{\sfour \cup \thuesys(\gpset)}^{*} \stra_{\ppath}(w,z) \concat \bd$. Because $\fd \pto \empstr, \bd \pto \empstr \in \sfour$, it follows that $\ques \pto_{\sfour \cup \thuesys(\gpset)}^{*} \ques^{-1}$ for $\ques \in \{\fd, \bd\}$, thus showing that $L_{\sfour \cup \thuesys(\gpset)}(\fd) = \albet^{*}$. Hence, since $\ppath'(w,v) = \ppath(w,z),\bd,u,\fd,v$ is a propagation path in $\prgr{\rel_{2}}$ and $L_{\sfour \cup \thuesys(\gpset)}(\fd) = \albet^{*}$, we know that $\stra_{\ppath}(w,z) \concat \bd \concat \fd \in L_{\sfour \cup \thuesys(\gpset)}(\fd)$, meaning $\ddom$ can be permuted above $\prexi$.
\end{proof}

\begin{theorem}\label{thm:lab-to-ref-lab}
Every derivation in $\lql$ can be transformed into one in $\qll$.
\end{theorem}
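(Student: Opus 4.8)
The plan is to eliminate, one application at a time, the four relational rules $\sa$, $\idom$, $\ddom$, and $\ned$ that separate $\lql$ from $\qll$, after a preliminary rewriting of the right rules $\diar$ and $\existsr$ as reachability rules. First I would, in a given $\lql$-derivation, replace every application of $\diar$ by one of $\prdia$ and every application of $\existsr$ by one of $\prexi$: this is justified by Remark~\ref{rem:diar-existsr-reachability-instances}, since a $\diar$-inference exhibits a relational atom $wRu$, hence a length-one propagation path whose string $\fd$ lies in $L_{\thuesys(\gpset)}(\fd)$ by reflexivity of the derivation relation, and an $\existsr$-inference exhibits a domain atom $y \in D(w)$, so the empty path from $w$ to $w$ witnesses $\dag_{2}(\fcset)$ in every case, its string $\empstr$ belonging to each of $L_{\thuesys(\gpset)}(\fd)$, $L_{\sfour \cup \thuesys(\gpset)}(\bd)$, $L_{\sfour \cup \thuesys(\gpset)}(\fd)$, and $L_{\sfive}(\fd)$ thanks to the productions $\fd \pto \empstr$ and $\bd \pto \empstr$ in $\sfour$ and $\sfive$. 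The result is a derivation using only rules of $\qll$ together with $\sa$, $\idom$, $\ddom$, $\ned$; note that $\D \in \qll$, so seriality needs no treatment here.

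The main argument is then an induction on the number $n$ of applications of rules from $\{\sa,\idom,\ddom,\ned\}$. If $n = 0$ we are done. Otherwise I would fix a \emph{topmost} such application $(\star)$; by this choice, the subderivation $\Pi$ above $(\star)$ contains no such rule and is therefore already a $\qll$-derivation. The idea is to permute $(\star)$ upward through $\Pi$ until each of its (possibly duplicated) copies meets an initial sequent, where it is absorbed: applying $\sa$, $\idom$, $\ddom$, or $\ned$ to an instance of $\ax$ or $\botl$ again yields an instance of $\ax$ or $\botl$, since these initial sequents constrain neither the relational nor the domain context, so the relational application is deleted and the initial sequent re-derived directly. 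Against two-premise rules, $(\star)$ copies onto both premises; against the logical rules and against $\D$ it permutes routinely, modulo renaming of fresh labels and variables; against $\prdia$ it permutes freely when it is $\idom$, $\ddom$, or $\ned$ (the side condition $\dag_{1}(\fcset)$ mentions only relational atoms, which those rules fix) and by \cite[\lem~5.8]{CiaLyoRamTiu21} when it is $\sa$; and against $\prexi$ or $\prexii$ it is handled by Lemma~\ref{lem:permutation-1} when it is $\sa$ and by Lemma~\ref{lem:permutation-2} when it is $\idom$ or $\ddom$, the remaining case of $\ned$ against $\prexi$ being the delicate one treated below. Since each step decreases the height of the subderivation standing above (each copy of) the relational application, a routine multiset-ordering argument shows this subprocess terminates; its net effect is the original derivation with $(\star)$ removed, so the induction hypothesis applies.

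The step I expect to be the main obstacle is the interaction of $\ned$ with a $\prexi$ inference whose side condition $\dag_{2}(\fcset)$ is witnessed by the very variable $b$ that $\ned$ introduces, since here one cannot literally permute. The key observation is that, $b$ being fresh below the $\ned$ application, the only domain atom mentioning $b$ in the ambient context is the one $\ned$ supplies, say $b \in D(w)$; hence the availability path of $\prexi$ must terminate at $w$, with its string in the language prescribed by $\dag_{2}(\fcset)$. I would then collapse the block ``$\prexi$ above $\ned$'' into a single application of $\prexii$, which introduces a fresh variable directly at $w$: its side condition $\dag_{3}(\fcset)$ (Figure~\ref{fig:prex-2-side-cond}) is exactly the path requirement just extracted from $\dag_{2}(\fcset)$ with the domain-membership clause dropped, and the freshness of $b$ in the conclusion is precisely the freshness condition of the eliminated $\ned$; $\prexii$ is available since $\ned \in \lql$ forces $\nedc \in \fcset$. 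This collapse eliminates the $\ned$ application directly and so again decreases $n$; for every other $\prexi$ or $\prexii$ inference, deleting a domain atom leaves every availability witness and every propagation path intact (the graph's edges being unaffected), so $\ned$ permutes freely modulo freshness renaming. Once all four relational rules are removed the derivation lies in $\qll$ with the same end sequent, and, as indicated by the example preceding the theorem, it uses only treelike labeled sequents — the fact exploited in the next section.
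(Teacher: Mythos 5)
Your proposal is correct and follows essentially the same route as the paper: take a topmost relational-rule application, permute it upward using Lemma~\ref{lem:permutation-1} for $\sa$ and Lemma~\ref{lem:permutation-2} for $\idom$/$\ddom$, absorb the non-permutable interaction of $\ned$ with $\prexi$ by collapsing that block into a single $\prexii$ application, and iterate until all relational rules are gone. The extra details you supply (the up-front reading of $\diar$/$\existsr$ as reachability-rule instances via the empty and length-one propagation paths, absorption at initial sequents, and the explicit termination measure) are consistent with, and merely flesh out, the paper's argument.
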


\begin{proof} Suppose we are given a proof in $\lql$, which is a proof in $\lql \cup \qll$, and so, we may consider it as such. Let us now consider a topmost occurrence of a relational rule $\sa$, $\idom$, $\ddom$, $\cdom$, or $\ned$ in our given derivation. By \lem~\ref{lem:permutation-1}, $\sa$ permutes above $\prdia$, $\prexi$, and $\prexii$, and since $\sa$ freely permutes above every other rule of $(\lql \cup \qll) \setminus \{\idom,\ddom\}$, the considered $\sa$ instance can be eliminated from the given derivation (the $\diar$ and $\existsr$ cases follow by \rmk~\ref{rem:diar-existsr-reachability-instances}). Similarly, by \lem~\ref{lem:permutation-2} and \rmk~\ref{rem:diar-existsr-reachability-instances}, a topmost occurrence of $\idom$, $\ddom$, or $\cdom$ can be eliminated from the given derivation. Moreover, observe that $\ned$ freely permutes above all rules of $\lql \setminus \big(\{\sa \ | \ \gpc(n,k) \in \fcset\} \cup \{\idom,\ddom\}\big)$ with the exception of $\prexi$. The only case in which $\ned$ does not freely permute above $\prexi$ is when $y$ is active in both $\prexi$ and $\ned$: 
\begin{center}
\begin{tabular}{c} 
\AxiomC{$\rel, y \in D(u), \Gamma \sar w :\phi[y/x], w :\exists x \phi, \Delta$}
\RightLabel{$\prexi$}
\UnaryInfC{$\rel, y \in D(u), \Gamma \sar w :\exists x \phi, \Delta$}
\RightLabel{$\ned$}
\UnaryInfC{$\rel, \Gamma \sar w :\exists x \phi, \Delta$}
\DisplayProof


\end{tabular}
\end{center} 
 However, the above case can be resolved by applying $\prexii$ to the top sequent, which yields the conclusion. By repeatedly eliminating topmost occurrences of relational rules as described above, 
 we eventually obtain a proof in $\qll$.
\end{proof}

\begin{theorem}[$\qll$ Soundness and Completeness]\label{thm:qll-sound-complete} 
 $\vdash w : \phi$ is derivable in $\qll$ \iffi $\phi \in \ql$.
\end{theorem}

\begin{proof} The forward direction (soundness) is shown by induction on the height of a given derivation in $\qll$. In particular, one interprets labeled sequents on models over the class $\mathcal{F}_{\mathcal{C}}$ of frames, showing that $\id$ is $\mathcal{F}_{\fcset}$-valid and that all inference rules of $\qll$ preserve validity (similar to~\cite[\thm~12.13]{NegPla11}). For the backward direction (completeness), we know that if $\phi \in \ql$, then $\vdash w : \phi$ is provable in $\lql$; hence, by \thm~\ref{thm:lab-to-ref-lab}, we may transform the derivation in $\lql$ into a derivation of $\vdash w : \phi$ in $\qll$. 
\end{proof}

\section{Nested Sequent Systems}\label{sec:nested-calculi}

Let $\Gamma$ and $\Delta$ be multisets of $\lang$-formulae, $\vx \subset \var$ be a multiset of variables called a \emph{signature}, and $\{w_{1}, \ldots, w_{m}\} \subset \lab$ be a set of labels. A \emph{nested sequent} $\ns$ is recursively defined as:
\begin{enumerate}

\item Each \emph{flat sequent} of the form $\vx; \nant \sar \ncon$ is a nested sequent;

\item Any expression of the form $\vx; \Gamma \sar \Delta, [\nsii_{1}]_{w_{1}}, \ldots, [\nsii_{m}]_{w_{m}}$, where $\nsii_{i}$ are nested sequents for each $1 \leq i \leq m$, is a nested sequent.

\end{enumerate}
 In the context of nested sequents, we use $\Gamma$, $\Delta$, $\nantii$, and $\nconii$ to denote multisets of $\lang$-formulae, $\vx$ and $\vy$ to denote signatures (which encode domains), $\ns$ and $\nsii$ to denote nested sequents, and a semi-colon to separate signatures from multisets of $\lang$-formulae. We use $w$, $u$, $v$, $\ldots$ (occasionally annotated) to denote labels from $\lab$ and we make the simplifying assumption that every occurrence of a label in a nested sequent is unique. The incorporation of labels in our nested systems is useful as it simplifies the presentation of our reachability rules below. 

 Observe that every nested sequent of the above form encodes a \emph{tree} whose nodes are flat sequents, i.e. each nested sequent $\ns = \vx; \Gamma \sar \Delta, [\nsii_{1}]_{w_{1}}, \ldots, [\nsii_{m}]_{w_{m}}$ can be graphically depicted as a tree $tr_{w_{0}}(\ns)$ of the following form:
\begin{center}
\begin{tabular}{c c c}
\xymatrix@C+=-3.5em@R=1.5em{
		&  \overset{w_{0}}{\boxed{\vx; \nant \sar \ncon}}\ar@{->}[dl]\ar@{->}[dr] &   		\\
 tr_{w_{1}}(\nsii_{1}) & \hdots  & tr_{w_{m}}(\nsii_{m})
}
\end{tabular}
\end{center}
 We refer to a flat sequent $\vx_{i}; \nant_{i} \sar \ncon_{i}$ occurring as a node labeled with $w_{i}$ in (the tree of) a nested sequent as a \emph{$w_{i}$-component}, or as a \emph{component} more generally if we do not wish to specify its label. We note that the root $\vx; \nant \sar \ncon$ is always assumed to be associated with the label $w_{0}$, that is, $\vx; \nant \sar \ncon$ is the unique $w_{0}$-component. We use the notation $\ns\{\nsii_{1}\}_{u_{1}}\cdots\{\nsii_{m}\}_{u_{k}}$ to indicate that $\nsii_{i}$ is a $u_{i}$-component, for $1 \leq i \leq k$. 

Since nested sequents encode trees and labeled sequents encode binary graphs, it is natural to view nested sequents as types of labeled sequents. Indeed, correspondences between \emph{labeled tree sequents}~\cite{IshKik07} (i.e. labeled sequents $\rel, \Gamma \sar \Delta$ such that $\rel$ forms a (labeled) tree and every label of the sequent either occurs in $\rel$ or is identical if the $\rel = \emptyset$) and nested sequents have been established in various settings~\cite{CiaLyoRamTiu21,GorRam12,Lyo21b,Pim18}. As our nested systems are extracted from the refined labeled systems (viz. $\qll$) of the previous section, we define translations between nested and labeled tree sequents. We use these translations to transform each refined labeled calculus into a nested calculus. 
 
\begin{definition}[Translation $\ltr$] Let $\ns$ be a nested sequent and for $\vx = \{x_{1}, \ldots, x_{n}\} \subset \var$, let us define $\vx \in D(w) := x_{1} \in D(w), \ldots, x_{n} \in D(w)$. We define $\ltr(\ns) := \ltr_{w_{0}}(\ns)$ and recursively define $\ltr_{w_{0}}(\ns) := \rel, \Gamma \sar \Delta$ as follows:
\begin{enumerate}

\item $\ltr_{v}(\vx, \nant\sar \ncon) := \vx \in D(v), v : \nant \sar v : \Delta$, and

\item $\ltr_{v}(\vx; \nant \sar \ncon, \bl \nsii_{1} \br_{u_{1}}, \ldots, \bl \nsii_{k} \br_{u_{k}}) :=$
$$
\ltr_{v}(\vx; \nant \sar \ncon) \seqcomp (vRu_{1}, \ldots, vRu_{k} \sar \empseq) \seqcomp \ltr_{u_{1}}(\nsii_{1}) \seqcomp \cdots \seqcomp \ltr_{u_{k}}(\nsii_{k}).
$$
\end{enumerate}
\end{definition}

\begin{definition}[Translation $\ntr$] Let $\Lambda := \rel, \mathcal{D}, \Gamma \sar \Delta$ be a labeled tree sequent with $\rel$ containing only relational atoms of the form $uRv$, $\mathcal{D}$ containing only domain atoms of the form $x \in D(u)$, and let $w$ be the root of $\Lambda$. We define $\Lambda_{1} \subseteq \Lambda$ \ifandonlyif there exists a labeled tree sequent $\Lambda_{2}$ such that $\Lambda = \Lambda_{1} \seqcomp \Lambda_{2}$ and where if $u$ is the root of $\Lambda_{1}$, then $\Lambda_{2}$ contains no domain atoms or labeled formulae associated with $u$. Let us further define $\Lambda_{u}$ to be the unique labeled tree sequent rooted at the label $u$ such that $\Lambda_{u} \subseteq \Lambda$. We define $\ntr(\Lambda) := \ntr_{u}(\Lambda)$ recursively on the tree structure of $\Lambda$:
\begin{enumerate}

\item if $\rel = \empseq$, then $\ntr_{v}(\Lambda) := \{x \ | \ (x \in D(v)) \in \mathcal{D}\}; (\Gamma \restriction v) \sar (\Delta \restriction v)$, and

\item if $vRz_{1}, \ldots vRz_{n}$ are all of the relational atoms occurring in the input sequent which have the form $vRz$, then
$$
\ntr_{v}(\Lambda) := \{x \ | \ (x \in D(v)) \in \mathcal{D}\}; (\Gamma \restriction v) \sar (\Delta \restriction v), \bl \ntr_{z_{1}}(\Lambda_{z_{1}}) \br, \ldots, \bl \ntr_{z_{n}}(\Lambda_{z_{n}})].
$$

\end{enumerate}
\end{definition}

\begin{example} We let $\ns := \emptyset; \exists x p(x) \sar q \lor r, \bl y; p \sar q(y), \bl y, z; \dia p \sar \dia q \br_{u} \br_{v}$ and show the output labeled sequent under the translation $\ltr$. We have
$$
\ltr(\ns) = \rel, w_{0} : \exists x p(x), v : p, u : \dia p \sar w_{0} : q \lor r, v : q(y), u : \dia q
$$
where $\rel = w_{0}Rv,vRu, y \in D(v), y \in D(u), z \in D(u)$. Moreover, $\ntr(\ltr(\ns)) = \ns$.
\end{example}


Before we define our nested systems, we define \emph{propagation graphs} and related concepts in the setting of nested sequents. Such notions easily transfer from the labeled setting since every nested sequent can be viewed as a labeled sequent under the $\ltr$ translation.

\begin{definition}[Propagation Graph of a Nested Sequent] Let $\ns$ be a nested sequent and let $\ltr(\ns) := \rel, \Gamma \sar \Delta$. We define the \emph{propagation graph} of $\ns$ to be $\prgr{\ns} := \prgr{\rel}$ and we note that propagation paths, their converses, and strings thereof are defined as in \dfn~\ref{def:propagation-path}.
\end{definition}

A uniform presentation of all nested systems is provided in \fig~\ref{fig:nested-calculi}. We remark that $\nqk := \mathsf{NQK}(\emptyset)$ and that $\prdia$, $\prexi$, and $\prexii$ constitute our reachability rules. The side conditions for such rules are identical to the side conditions given in the previous section, i.e. to apply the rule top-down (or, bottom-up) one takes the premise (conclusion, resp.) $\ns$, obtains $\rel$ via the translation $\ltr(\ns) = \rel,\Gamma \sar \Delta$, and then checks 
 if the side conditions hold relative to $\prgr{\rel}$.

\begin{figure}[t]

\begin{center}
\begin{tabular}{c c c}
\AxiomC{}
\RightLabel{$\ax$}
\UnaryInfC{$\ns \hol \vx; \nant, p(\vec{x}) \sar p(\vec{x}), \ncon \hor_{w}$}
\DisplayProof

&

\AxiomC{$\ns \hol \vx; \nant \sar \phi, \ncon \hor_{w}  $}
\RightLabel{$\negl$}
\UnaryInfC{$\ns \hol \vx; \nant, \neg \phi \sar \ncon \hor_{w}  $}
\DisplayProof

&

\AxiomC{$\ns \hol \vx; \nant, \phi \sar \ncon \hor_{w}  $}
\RightLabel{$\negr$}
\UnaryInfC{$\ns \hol \vx; \nant \sar \neg \phi, \ncon \hor_{w}  $}
\DisplayProof
\end{tabular}
\end{center}

\begin{center}
\begin{tabular}{c c c}
\AxiomC{$\ns \hol \vx; \nant, \phi \sar \ncon \hor_{w}  $}
\AxiomC{$\ns \hol \vx; \nant, \psi \sar \ncon \hor_{w}  $}
\RightLabel{$\disl$}
\BinaryInfC{$\ns \hol \vx; \nant, \phi \lor \psi \sar \ncon \hor_{w}$}
\DisplayProof

&

\AxiomC{$\ns \hol \vx; \nant \sar \phi, \psi, \ncon \hor_{w}  $}
\RightLabel{$\disr$}
\UnaryInfC{$\ns \hol \vx; \nant \sar \phi \lor \psi, \ncon \hor_{w}  $}
\DisplayProof
\end{tabular}
\end{center}

\begin{center}
\begin{tabular}{c c}
\AxiomC{$\ns \hol \vx; \nant \sar \phi[y/x], \exists x \phi, \ncon \hor_{w}$}
\RightLabel{$\prexi^{\dag_{2}(\fcset)}$}
\UnaryInfC{$\ns \hol \vx; \nant\sar \exists x \phi, \ncon \hor_{w}$}
\DisplayProof

&

\AxiomC{$\ns \hol \vy,y; \nantii \sar \nconii \hor_{u} \hol \vx; \nant \sar \phi[y/x], \exists x \phi, \ncon \hor_{w}$}
\RightLabel{$\prexii^{\dag_{3}(\fcset)}$}
\UnaryInfC{$\ns \hol \vy; \nantii \sar \nconii \hor_{u} \hol \vx; \nant \sar \exists x \phi, \ncon \hor_{w}$}
\DisplayProof
\end{tabular}
\end{center}

\begin{center}
\begin{tabular}{c c}
\AxiomC{$\ns \hol \vx; \nant \sar \ncon, [\emptyset; \phi \sar \emptyset]_{u} \hor_{w} $}
\RightLabel{$\dial$}
\UnaryInfC{$\ns \hol \vx; \nant, \dia \phi \sar \ncon \hor_{w} $}
\DisplayProof

&

\AxiomC{$\ns \hol \vx; \nant \sar \Diamond \phi, \ncon \hor_{w} \hol \vy; \nantii \sar \phi, \nconii \hor_{u}$}
\RightLabel{$\prdia^{\dag_{1}(\fcset)}$}
\UnaryInfC{$\ns \hol \vx; \nant \sar \Diamond \phi, \ncon \hor_{w} \hol \vy; \nantii  \sar \nconii \hor_{u}$}
\DisplayProof
\end{tabular}
\end{center}

\begin{center}
\begin{tabular}{c c c}
\AxiomC{$\ns \hol \vx, \vec{x}; \nant, p(\vec{x}) \sar \ncon \hor_{w}$}
\RightLabel{$\domr$}
\UnaryInfC{$\ns \hol \vx; \nant, p(\vec{x}) \sar \ncon \hor_{w}$}
\DisplayProof

&

\AxiomC{$\ns \hol \vx, y; \nant, \phi[y/x] \sar \ncon \hor_{w} $}
\RightLabel{$\existsl^{\dag}$}
\UnaryInfC{$\ns \hol \vx; \nant, \exists x \phi \sar \ncon \hor_{w} $}
\DisplayProof

&

\AxiomC{$\ns \hol \vx; \nant \sar \ncon, [ \emptyset; \emptyset \sar \emptyset]_{u} \hor_{w}$}
\RightLabel{$\D$}
\UnaryInfC{$\ns \hol \vx; \nant \sar \ncon \hor_{w}$}
\DisplayProof
\end{tabular}
\end{center}

\caption{The nested system $\nql$, which includes $\D$ \iffi $\serc \in \fcset$. The side condition $\dag$ states that the rule is applicable only if $y$ is fresh; see \dfn~\ref{def:side-conditions} for the interpretation of the side conditions $\dag_{1}(\fcset) -\dag_{3}(\fcset)$, which are functions of $\fcset$.}
\label{fig:nested-calculi}
\end{figure}

\begin{definition}[Labeled Tree Derivation~\cite{Lyo21b}]
We define a \emph{labeled tree derivation} to be a proof containing only labeled tree sequents. We say that a labeled tree derivation has the \emph{fixed root property} \ifandonlyif every labeled sequent in the derivation has the same root.
\end{definition}

\begin{lemma}\label{lem:labeled-tree-derivations}
Every proof in $\qll$ of a labeled tree sequent is a labeled tree proof with the fixed root property.
\end{lemma}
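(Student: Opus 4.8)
The plan is to induct on the height of a given proof $\mathcal{D}$ in $\qll$ whose endsequent is a labeled tree sequent, simultaneously establishing that $\mathcal{D}$ contains only labeled tree sequents and that all of them share the root of the endsequent. The base case is immediate: the initial rules $\ax$ and $\botl$ have conclusions that are trivially labeled tree sequents, and there is a single sequent so the fixed root property holds vacuously. For the inductive step, I would fix the last rule applied and argue that (i) if the conclusion is a labeled tree sequent with root $w$, then each premise is again a labeled tree sequent with root $w$, so that the induction hypothesis applies to the immediate subproofs; and (ii) conversely, the rule neither introduces a non-tree relational structure nor alters the root.

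The key observation driving step (i) is that in $\qll$ the only rules touching the relational part $\rel$ of a sequent are $\dial$ and $\prdia$ (via its displayed $u$-component, which must already be present), together with the $\D$ rule when $\serc \in \fcset$ — and crucially the relational rules $\sa$, $\idom$, $\ddom$, $\ned$ have been \emph{removed} in passing from $\lql$ to $\qll$. So I would organize the case analysis into three groups. \textbf{Propositional and quantifier rules fixing $\rel$:} for $\ax$, $\botl$, $\negl$, $\negr$, $\disl$, $\disr$, $\existsl$, $\prexi$, $\prexii$, the relational atoms in premise and conclusion coincide, hence if $\rel$ forms a tree rooted at $w$ in the conclusion it does so in every premise, and the root is unchanged; here one only needs to check that adding labeled formulae / domain atoms at labels already present in $\rel$ (or at $w$ itself when $\rel = \emptyset$) keeps the labeled-tree shape, which is immediate from the definition. \textbf{Rules extending $\rel$ downward:} $\dial$ and $\D$ (and, reading $\prdia$ bottom-up, its non-principal part) add a fresh child $u$ of an existing node $w$; appending a fresh leaf to a tree yields a tree with the same root, so the premise is again a labeled tree sequent with root $w$. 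For $\prdia$ one additionally notes that its second displayed component $u$ already occurs in the conclusion, so no structural change occurs there. \textbf{Reading rules top-down from the induction hypothesis:} given that the premise(s) are labeled tree sequents with a common root $w$ by IH, the corresponding conclusion is obtained either by deleting a fresh leaf (for $\dial$, $\D$) or by leaving $\rel$ untouched — in both cases a labeled tree sequent with root $w$ — so the whole derivation has the fixed root property.

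I expect the main obstacle to be bookkeeping precision rather than conceptual difficulty: one must be careful that the \emph{freshness side conditions} on $\dial$, $\D$, $\existsl$ (and the eigenvariable/label conditions folded into $\dag_1(\fcset)$–$\dag_3(\fcset)$) guarantee that the newly introduced label is genuinely a fresh \emph{leaf} attached to an existing node, so that the resulting $\rel$ is still a tree and not, say, a disconnected component or a cycle; and symmetrically, one must confirm that no rule ever removes an internal node or merges two labels, which is where the absence of $\idom$, $\ddom$, $\ned$, $\sa$ in $\qll$ is essential. It is also worth remarking explicitly — perhaps as a short lemma or a parenthetical — that since $\prexi$ and $\prexii$ only inspect $\prgr{\rel}$ through their side conditions and add a domain atom / labeled formula at a label already present (for $\prexi$) or already present via the displayed $u$-component (for $\prexii$), they too are relational-atom-preserving. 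Once all rule cases are dispatched, the two conclusions of the statement follow simultaneously, and this lemma is exactly what licenses the forthcoming notational translation between $\qll$ and $\nql$ via $\ntr$ and $\ltr$.
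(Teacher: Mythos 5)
Your proposal is correct and takes essentially the same route as the paper: the paper's proof is a one-line observation that every rule of $\qll$, read bottom-up from a labeled tree sequent, yields premises that are labeled tree sequents with the same root, which is precisely the case analysis you spell out (relational-atom-preserving rules versus the fresh-leaf-adding rules $\dial$ and $\D$, with the removal of $\sa$, $\idom$, $\ddom$, $\ned$ doing the essential work). Your version simply makes the induction and the rule-by-rule bookkeeping explicit where the paper leaves it implicit.
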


\begin{proof}
The lemma follows from the fact that if any rule of $\qll$ is applied bottom-up to a labeled tree sequent, then each premise is a labeled tree sequent with the same root. 
\end{proof}

\begin{theorem}[$\nql$ Soundness and Completeness]\label{thm:soundness} 
 $\emptyset; \emptyset \sar \phi$ is derivable in $\nql$ \iffi $\phi \in \ql$. 
\end{theorem}

\begin{proof} Follows from \thm~\ref{thm:qll-sound-complete} and \lem~\ref{lem:labeled-tree-derivations} using the $\ltr$ and $\ntr$ translations.
\end{proof}

 We stress that any sound formulation of a cut rule is redundant in $\nql$ as each nested calculus is complete without the inclusion of such a rule, i.e. our nested systems are cut-free.






\section{Conclusion}\label{sec:conclusion}

 In this work, we unified first-order modal logics in a single nested sequent framework by means of reachability rules. These rules, which may propagate formulae and/or check for data along paths within a nested sequent, permit one to toggle between the various logics by parameterizing rules with distinct semi-Thue systems. We used such rules to provide 
 the first cut-free nested sequent systems for first-order modal logics.
 
 In future work, we aim to investigate how reachability rules can be leveraged to provide nested systems for broader classes of logics. This can be accomplished in multiple ways. For instance, we could investigate the effect of parameterizing reachability rules with alternative semi-Thue systems, or through the introduction of new logical connectives (e.g. converse modalities). Second, we aim to explore the formulation of reachability rules for alternative quantifiers; e.g. Miller and Tiu's nabla quantifier $\nabla$~\cite{MilTiu05} or the Gabbay-Pitts $\new$ quantifier~\cite{GabPit99}. Moreover, we aim to investigate the properties of our nested systems such as syntactic cut-elimination, 
 invertibility of rules, and the admissibility of structural rules (e.g. contraction and weakening), which are conjectured to hold. 
  Last, such systems may prove well-suited for identifying decidable fragments of first-order modal (and related) logics by means of proof-search algorithms. 

\bibliography{bibliography}

\begin{thebibliography}{10}

\bibitem{Avr96}
Arnon Avron.
\newblock The method of hypersequents in the proof theory of propositional
  non-classical logics.
\newblock In Wilfrid Hodges, Martin Hyland, Charles Steinhorn, and John Truss,
  editors, {\em Logic: From Foundations to Applications: European Logic
  Colloquium}, page 1–32. Clarendon Press, USA, 1996.

\bibitem{Bel82}
Nuel~D Belnap.
\newblock Display logic.
\newblock {\em Journal of philosophical logic}, 11(4):375--417, 1982.

\bibitem{BraGhi07}
Torben Braüner and Silvio Ghilardi.
\newblock 9 first-order modal logic.
\newblock In Patrick Blackburn, Johan {Van Benthem}, and Frank Wolter, editors,
  {\em Handbook of Modal Logic}, volume~3 of {\em Studies in Logic and
  Practical Reasoning}, pages 549--620. Elsevier, 2007.
\newblock URL:
  \url{https://www.sciencedirect.com/science/article/pii/S1570246407800127},
  \href {https://doi.org/https://doi.org/10.1016/S1570-2464(07)80012-7}
  {\path{doi:https://doi.org/10.1016/S1570-2464(07)80012-7}}.

\bibitem{Bru09}
Kai Br{\"{u}}nnler.
\newblock Deep sequent systems for modal logic.
\newblock {\em Arch. Math. Log.}, 48(6):551--577, 2009.
\newblock \href {https://doi.org/10.1007/s00153-009-0137-3}
  {\path{doi:10.1007/s00153-009-0137-3}}.

\bibitem{Bul92}
Robert~A. Bull.
\newblock Cut elimination for propositional dynamic logic without *.
\newblock {\em Z. Math. Logik Grundlag. Math.}, 38(2):85--100, 1992.

\bibitem{CasCerGasHer97}
Marcos~A Castilho, Luis~Farinas del Cerro, Olivier Gasquet, and Andreas Herzig.
\newblock Modal tableaux with propagation rules and structural rules.
\newblock {\em Fundamenta Informaticae}, 32(3, 4):281--297, 1997.

\bibitem{CiaLyoRamTiu21}
Agata Ciabattoni, Tim Lyon, Revantha Ramanayake, and Alwen Tiu.
\newblock Display to labeled proofs and back again for tense logics.
\newblock {\em ACM Transactions on Computational Logic}, 22(3):1–31, 2021.
\newblock \href {https://doi.org/https://doi.org/10.1145/3460492}
  {\path{doi:https://doi.org/10.1145/3460492}}.

\bibitem{CiaMafSpe13}
Agata Ciabattoni, Paolo Maffezioli, and Lara Spendier.
\newblock Hypersequent and labelled calculi for intermediate logics.
\newblock In Didier Galmiche and Dominique Larchey-Wendling, editors, {\em
  Automated Reasoning with Analytic Tableaux and Related Methods}, volume 8123
  of {\em Lecture Notes in Computer Science}, pages 81--96, Berlin, Heidelberg,
  2013. Springer Berlin Heidelberg.

\bibitem{Fit72}
Melvin Fitting.
\newblock Tableau methods of proof for modal logics.
\newblock {\em Notre Dame Journal of Formal Logic}, 13(2):237--247, 1972.

\bibitem{Fit14}
Melvin Fitting.
\newblock Nested sequents for intuitionistic logics.
\newblock {\em Notre Dame Journal of Formal Logic}, 55(1):41--61, 2014.

\bibitem{FitKuz15}
Melvin Fitting and Roman Kuznets.
\newblock Modal interpolation via nested sequents.
\newblock {\em Annals of pure and applied logic}, 166(3):274--305, 2015.
\newblock \href {https://doi.org/10.1016/j.apal.2014.11.002}
  {\path{doi:10.1016/j.apal.2014.11.002}}.

\bibitem{FitMen98}
Melvin~Chris Fitting and Richard~L. Mendelsohn.
\newblock {\em First-Order Modal Logic}.
\newblock Dordrecht, Netherland: Kluwer Academic Publishers, 1998.

\bibitem{Gab96}
Dov~M Gabbay.
\newblock {\em Labelled deductive systems}, volume~33 of {\em Oxford Logic
  guides}.
\newblock Clarendon Press/Oxford Science Publications, 1996.

\bibitem{GabPit99}
M.~Gabbay and A.~Pitts.
\newblock A new approach to abstract syntax involving binders.
\newblock In {\em Proceedings. 14th Symposium on Logic in Computer Science
  (Cat. No. PR00158)}, pages 214--224, 1999.
\newblock \href {https://doi.org/10.1109/LICS.1999.782617}
  {\path{doi:10.1109/LICS.1999.782617}}.

\bibitem{Gen35a}
Gerhard Gentzen.
\newblock Untersuchungen {\"u}ber das logische schlie{\ss}en. i.
\newblock {\em Mathematische zeitschrift}, 39(1):176--210, 1935.

\bibitem{Gen35b}
Gerhard Gentzen.
\newblock Untersuchungen {\"u}ber das logische schlie{\ss}en. ii.
\newblock {\em Mathematische Zeitschrift}, 39(1):405--431, 1935.

\bibitem{GorPosTiu08}
Rajeev Gor{\'{e}}, Linda Postniece, and Alwen Tiu.
\newblock Cut-elimination and proof-search for bi-intuitionistic logic using
  nested sequents.
\newblock In Carlos Areces and Robert Goldblatt, editors, {\em Advances in
  Modal Logic 7, papers from the seventh conference on "Advances in Modal
  Logic," 2008}, pages 43--66. College Publications, 2008.

\bibitem{GorPosTiu11}
Rajeev Gor{\'{e}}, Linda Postniece, and Alwen Tiu.
\newblock On the correspondence between display postulates and deep inference
  in nested sequent calculi for tense logics.
\newblock {\em Log. Methods Comput. Sci.}, 7(2), 2011.
\newblock \href {https://doi.org/10.2168/LMCS-7(2:8)2011}
  {\path{doi:10.2168/LMCS-7(2:8)2011}}.

\bibitem{GorRam12}
Rajeev Gor{\'{e}} and Revantha Ramanayake.
\newblock Labelled tree sequents, tree hypersequents and nested (deep)
  sequents.
\newblock In Thomas Bolander, Torben Bra{\"{u}}ner, Silvio Ghilardi, and
  Lawrence~S. Moss, editors, {\em Advances in Modal Logic 9, papers from the
  ninth conference on "Advances in Modal Logic," 2012}, pages 279--299. College
  Publications, 2012.
\newblock URL: \url{http://www.aiml.net/volumes/volume9/Gore-Ramanayake.pdf}.

\bibitem{IshKik07}
Ryo Ishigaki and Kentaro Kikuchi.
\newblock Tree-sequent methods for subintuitionistic predicate logics.
\newblock In Nicola Olivetti, editor, {\em Automated Reasoning with Analytic
  Tableaux and Related Methods}, volume 4548 of {\em Lecture Notes in Computer
  Science}, pages 149--164, Berlin, Heidelberg, 2007. Springer Berlin
  Heidelberg.

\bibitem{Kan57}
Stig Kanger.
\newblock {\em Provability in logic}.
\newblock Almqvist \& Wiksell, 1957.

\bibitem{Kas94}
Ryo Kashima.
\newblock Cut-free sequent calculi for some tense logics.
\newblock {\em Studia Logica}, 53(1):119--135, 1994.

\bibitem{Kri63}
Saul~A. Kripke.
\newblock Semantical considerations on modal logic.
\newblock {\em Acta Philosophica Fennica}, 16:83--94, 1963.

\bibitem{Lyo21a}
Tim Lyon.
\newblock {On the correspondence between nested calculi and semantic systems
  for intuitionistic logics}.
\newblock {\em Journal of Logic and Computation}, 31(1):213--265, 12 2020.
\newblock \href {https://doi.org/10.1093/logcom/exaa078}
  {\path{doi:10.1093/logcom/exaa078}}.

\bibitem{Lyo21thesis}
Tim Lyon.
\newblock {\em Refining Labelled Systems for Modal and Constructive Logics with
  Applications}.
\newblock PhD thesis, Technische Universit\"at Wien, 2021.

\bibitem{LyoTiuGorClo20}
Tim Lyon, Alwen Tiu, Rajeev Gor{\'{e}}, and Ranald Clouston.
\newblock Syntactic interpolation for tense logics and bi-intuitionistic logic
  via nested sequents.
\newblock In Maribel Fern{\'{a}}ndez and Anca Muscholl, editors, {\em 28th
  {EACSL} Annual Conference on Computer Science Logic, {CSL} 2020}, volume 152
  of {\em LIPIcs}, pages 28:1--28:16. Schloss Dagstuhl - Leibniz-Zentrum
  f{\"{u}}r Informatik, 2020.
\newblock \href {https://doi.org/10.4230/LIPIcs.CSL.2020.28}
  {\path{doi:10.4230/LIPIcs.CSL.2020.28}}.

\bibitem{LyoBer19}
Tim Lyon and Kees van Berkel.
\newblock Automating agential reasoning: Proof-calculi and syntactic
  decidability for stit logics.
\newblock In Matteo Baldoni, Mehdi Dastani, Beishui Liao, Yuko Sakurai, and Rym
  Zalila~Wenkstern, editors, {\em {PRIMA} 2019: Principles and Practice of
  Multi-Agent Systems - 22nd International Conference, Proceedings}, volume
  11873 of {\em Lecture Notes in Computer Science}, pages 202--218, Cham, 2019.
  Springer International Publishing.

\bibitem{Lyo21b}
Tim~S. Lyon.
\newblock Nested sequents for intuitionistic modal logics via structural
  refinement.
\newblock In Anupam Das and Sara Negri, editors, {\em Automated Reasoning with
  Analytic Tableaux and Related Methods}, pages 409--427, Cham, 2021. Springer
  International Publishing.

\bibitem{LyoAlv22}
Tim~S. Lyon and Lucía Gómez~Álvarez.
\newblock {Automating Reasoning with Standpoint Logic via Nested Sequents}.
\newblock In {\em {Proceedings of the 19th International Conference on
  Principles of Knowledge Representation and Reasoning}}, pages 257--266, 8
  2022.
\newblock \href {https://doi.org/10.24963/kr.2022/26}
  {\path{doi:10.24963/kr.2022/26}}.

\bibitem{MilTiu05}
Dale Miller and Alwen Tiu.
\newblock A proof theory for generic judgments.
\newblock {\em ACM Trans. Comput. Logic}, 6(4):749–783, oct 2005.
\newblock \href {https://doi.org/10.1145/1094622.1094628}
  {\path{doi:10.1145/1094622.1094628}}.

\bibitem{Neg05}
Sara Negri.
\newblock Proof analysis in modal logic.
\newblock {\em Journal of Philosophical Logic}, 34(5-6):507, 2005.

\bibitem{NegPla11}
Sara Negri and Jan Von~Plato.
\newblock {\em Proof analysis: a contribution to Hilbert's last problem}.
\newblock Cambridge University Press, 2011.

\bibitem{Pim18}
Elaine Pimentel.
\newblock A semantical view of proof systems.
\newblock In Lawrence~S. Moss, Ruy J. G.~B. de~Queiroz, and Maricarmen
  Mart{\'{\i}}nez, editors, {\em Logic, Language, Information, and Computation
  - 25th International Workshop, WoLLIC 2018, Proceedings}, volume 10944 of
  {\em Lecture Notes in Computer Science}, pages 61--76. Springer, 2018.
\newblock \href {https://doi.org/10.1007/978-3-662-57669-4\_3}
  {\path{doi:10.1007/978-3-662-57669-4\_3}}.

\bibitem{Pog09}
Francesca Poggiolesi.
\newblock The method of tree-hypersequents for modal propositional logic.
\newblock In David Makinson, Jacek Malinowski, and Heinrich Wansing, editors,
  {\em Towards Mathematical Philosophy}, volume~28 of {\em Trends in logic},
  pages 31--51. Springer, 2009.
\newblock \href {https://doi.org/10.1007/978-1-4020-9084-4\_3}
  {\path{doi:10.1007/978-1-4020-9084-4\_3}}.

\bibitem{Pos47}
Emil~L Post.
\newblock Recursive unsolvability of a problem of {Thue}.
\newblock {\em The Journal of Symbolic Logic}, 12(1):1--11, 1947.

\bibitem{Sim94}
Alex~K Simpson.
\newblock {\em The proof theory and semantics of intuitionistic modal logic}.
\newblock PhD thesis, University of Edinburgh. College of Science and
  Engineering. School of Informatics, 1994.

\bibitem{TiuIanGor12}
Alwen Tiu, Egor Ianovski, and Rajeev Gor{\'{e}}.
\newblock Grammar logics in nested sequent calculus: Proof theory and decision
  procedures.
\newblock In Thomas Bolander, Torben Bra{\"{u}}ner, Silvio Ghilardi, and
  Lawrence~S. Moss, editors, {\em Advances in Modal Logic 9, papers from the
  ninth conference on "Advances in Modal Logic"}, pages 516--537. College
  Publications, 2012.

\bibitem{Vig00}
Luca Vigan{\`o}.
\newblock {\em Labelled Non-Classical Logics}.
\newblock Springer Science \& Business Media, 2000.

\end{thebibliography}




\end{document}